\newcommand{\longversion}[1]{#1}
\newcommand{\shortversion}[1]{}
\newcommand{\citex}[1]{\citeauthor{#1}~\shortcite{#1}}
\newcommand{\citey}[1]{\citeauthor{#1},~\citeyear{#1}}
  \title{Backdoors to Normality for Disjunctive Logic
    Programs\thanks{Research supported by ERC (COMPLEX REASON
      239962).
}
  }
  \author{Johannes Klaus Fichte \and Stefan Szeider\\
    Vienna University of Technology, Austria\\
    {fichte@kr.tuwien.ac.at}, {stefan@szeider.net}}
\thanks{Research supported by the ERC, Grant COMPLEX
      REASON 239962.}  \footnote{This is the author's self-archived
  \author{Johannes Klaus Fichte and Stefan Szeider\\[3pt]
    Vienna University of Technology, Austria\\
    {fichte@kr.tuwien.ac.at}, {stefan@szeider.net} }
\newtheorem{LEM}{Lemma} 
\newtheorem{THE}{Theorem} 
\newtheorem{COR}{Corollary}
\newtheorem{PRO}{Proposition} 
\newtheorem{EX}{Example} 
\def\hy{\hbox{-}\nobreak\hskip0pt} 
\def\hyph{-\penalty0\hskip0pt\relax}
\newcommand{\SB}{\{\,}%
\newcommand{\SM}{\;{|}\;}%
\newcommand{\SE}{\,\}}%
\newcommand{\Card}[1]{|#1|}
\newcommand{\CCard}[1]{\|#1\|}
\let\phi=\varphi
\let\epsilon=\varepsilon
\newcommand{\CCC}{\mathcal{C}}
\newcommand{\NP}{\text{\normalfont NP}}
\newcommand{\coNP}{\text{\normalfont co-NP}}
\newcommand{\paraNP}{\text{\normalfont paraNP}}
\newcommand{\coparaNP}{\text{\normalfont co-paraNP}}
\newcommand{\FPT}{\text{\normalfont FPT}}
\newcommand{\Nat}{\mathbb{N}}
\newcommand{\ta}[1]{\text{\normalfont ta(\ensuremath{#1})}}
\let\phi=\varphi
\newcommand{\stableset}{\text{\normalfont AS}}
\newcommand{\at}{\text{\normalfont at}}
\newcommand{\pnot}{\neg}
\newcommand{\por}{\vee}
\newcommand{\rsep}{;\;}
\newcommand{\lrsep}{}
\renewcommand{\DH}{\text{Pos}}
\newcommand{\Constraints}{\text{Constr}}
\newcommand{\MinCheck}{\textsc{MinCheck}\xspace}
\newcommand{\false}{\textit{false}\xspace}
\newcommand{\delBds}[1]{deletion {\ensuremath{#1}}\hy backdoor}
\newcommand{\strongBds}[1]{strong {\ensuremath{#1}}\hy backdoor}
\newcommand{\problemn}[1]{{\scshape #1}}
\newcommand{\BdBrave}[1]{\problemn{\ensuremath{#1}\hy Back\-door\hyph
    Brave\hyph Rea\-son\-ing}}
\newcommand{\BdSkept}[1]{\problemn{\ensuremath{#1}\hy Back\-door\hyph
    Skeptical\hyph Rea\-son\-ing}} 
\newcommand{\BdSkeptHyph}[1]{\problemn{\ensuremath{#1}\hyph Back\-door\hyph
    Skeptical\hyph Rea\-son\-ing}} 
\newcommand{\BdCheck}[1]{\problemn{\ensuremath{#1}\hy Back\-door\hyph Asp\hyph Check}}
\newcommand{\BdDetect}[2]{\problemn{#1 \ensuremath{#2}\hy Back\-door\hyph De\-tec\-tion}}
\newcommand{\SAT}{\problemn{Sat}\xspace} 
\newcommand{\ASP}{\textsc{Asp}\xspace}
\newcommand{\targetclass}[2][]{%
\ifx#10{{\normalfont\textbf{#2}}}\else{{\normalfont\textbf{#2}}}\fi}
\newcommand{\Normal}[1][]{\normalfont{\textbf{Normal}}\xspace}
\newcommand{\tight}[1][]{\targetclass[#1]{Tight}\xspace}
\newcommand{\pproblem}[4]{
\begin{quote}
\begin{samepage}
{\scshape {#1}\nopagebreak[4]} \normalfont \vspace{0.4em}\nopagebreak[4]\\
\begin{tabular}{lp{0.6\textwidth}}
  \emph{Given:} & \makefirstuc{#2} \tabularnewline[1pt]
 \emph{Parameter:} & \makefirstuc{#3} \tabularnewline[1pt]
  \emph{Question:} & \makefirstuc{#4} \tabularnewline[1pt]
\end{tabular}
\end{samepage}
\end{quote}
}
\newcommand{\backdoorX}{X}
\newcommand{\modelM}{M}
\newcommand{\FSkept}{F_{\text{Skept}}}
\newcommand{\FBrave}{F_{\text{Brave}}}
  \def\leftcite{\@up[}\def\rightcite{\@up]}
  \def\cite{\def\citeauthoryear##1##2{\def\@thisauthor{##1}%
               \ifx \@lastauthor \@thisauthor \relax \else##1, \fi ##2}\@icite}
  \def\shortcite{\def\citeauthoryear##1##2{##2}\@icite}
  \def\citeauthor{\def\citeauthoryear##1##2{##1}\@nbcite}
  \def\citeyear{\def\citeauthoryear##1##2{##2}\@nbcite}
  \def\@icite{\leavevmode\def\@citeseppen{-1000}%
   \def\@cite##1##2{\leftcite\nobreak\hskip 0in{##1\if@tempswa , ##2\fi}\rightcite}%
   \@ifnextchar [{\@tempswatrue\@citex}{\@tempswafalse\@citex[]}}
  \def\@nbcite{\leavevmode\def\@citeseppen{1000}%
   \def\@cite##1##2{{##1\if@tempswa , ##2\fi}}%
   \@ifnextchar [{\@tempswatrue\@citex}{\@tempswafalse\@citex[]}}
  \def\@citex[#1]#2{%
    \def\@lastauthor{}\def\@citea{}%
    \@cite{\@for\@citeb:=#2\do
      {\@citea\def\@citea{;\penalty\@citeseppen\ }%
       \if@filesw\immediate\write\@auxout{\string\citation{\@citeb}}\fi
       \@ifundefined{b@\@citeb}{\def\@thisauthor{}{\bf ?}\@warning
         {Citation `\@citeb' on page \thepage \space undefined}}%
       {\csname b@\@citeb\endcsname}\let\@lastauthor\@thisauthor}}{#1}}
  \def\@biblabel#1{\def\citeauthoryear##1##2{##1, ##2}\@up{[}#1\@up{]}\hfill}
  \def\@up#1{\leavevmode\raise.2ex\hbox{#1}}
\begin{document}
\maketitle
\begin{abstract}
  Over the last two decades, propositional satisfiability (\SAT) has
  become one of the most successful and widely applied techniques for
  the solution of NP-complete problems. The aim of this paper is to
  investigate theoretically how \SAT can be utilized for the efficient
  solution of problems that are harder than NP or co-NP.  In
  particular, we consider the fundamental reasoning problems in
  propositional disjunctive answer set programming (\ASP),
  \textsc{Brave Reasoning} and \textsc{Skeptical Reasoning}, which ask
  whether a given atom is contained in at least one or in all answer
  sets, respectively.  Both problems are located at the second level
  of the Polynomial Hierarchy and thus assumed to be harder than NP or
  co-NP\shortversion{\@}. One cannot transform these two reasoning
  problems into \SAT in polynomial time, unless the Polynomial
  Hierarchy collapses.

  We show that certain structural aspects of disjunctive logic
  programs can be utilized to break through this complexity barrier,
  using new techniques from Parameterized Complexity. In particular,
  we exhibit transformations from \textsc{Brave} and \textsc{Skeptical
    Reasoning} to \SAT that run in time $O(2^k n^2)$ where~$k$ is a
  structural parameter of the instance and $n$ the input size. In
  other words, the reduction is fixed-parameter tractable for
  parameter $k$. As the parameter $k$ we take the size of a smallest
  backdoor with respect to the class of normal (i.e.,
  disjunction-free) programs. Such a backdoor is a set of atoms that
  when deleted makes the program normal. In consequence, the
  combinatorial explosion, which is expected when transforming a
  problem from the second level of the Polynomial Hierarchy to the
  first level, can now be confined to the parameter~$k$, while the
  running time of the reduction is polynomial in the input size $n$,
  where the order of the polynomial is independent of $k$. We show
  that such a transformation is not possible if we consider backdoors
  with respect to tightness instead of normality.

  We think that our approach is applicable to many other hard
  combinatorial problems that lie beyond NP or co-NP, and thus
  significantly enlarge the applicability of \SAT.

\end{abstract}
\section{Introduction}                                   
Over the last two decades, propositional satisfiability (\SAT) has
become one of the most successful and widely applied techniques for
the solution of $\NP$\hy complete problems. Today's \SAT-solvers are
extremely efficient and robust, instances with hundreds of thousands
of variables and clauses can be solved routinely. In fact, due to the
success of \SAT, $\NP$\hy complete problems have lost their scariness,
as in many cases one can efficiently encode $\NP$\hy complete problems
to \SAT and solve them by means of a
\SAT-solver~\cite{GomesKautzSabharwalSelman08,BiereHeuleMaarenWalsh09}.

We investigate transformations into \SAT for problems that are harder
than $\NP$ or $\coNP$. In particular, we consider various search
problems that arise in \emph{disjunctive answer set programming
  (\ASP)}. With \ASP one can describe a problem by means of rules that
form a disjunctive logic program, whose solutions are answer
sets. Many important problems of AI and reasoning can be represented
in terms of the search for answer
sets~\cite{BrewkaEiterTruszczynski11,MarekTruszczynski99,Niemela99}. Two
of the most fundamental \ASP problems are \textsc{Brave Reasoning} (is
a certain atom contained in at least one answer set?) and
\textsc{Skeptical Reasoning} (is a certain atom contained in all answer
sets?). Both problems are located at the second level of the
Polynomial Hierarchy~\cite{EiterGottlob95} and thus assumed to be
harder than $\NP$ or $\coNP$. It would be desirable to utilize
\SAT-solvers for these problems. However, we cannot transform these
two reasoning problems into \SAT in polynomial time, unless the
Polynomial Hierarchy collapses, which is believed to be unlikely.

\paragraph{New Contribution} In this work we show how to utilize
certain structural aspects of disjunctive logic programs to transform
the two \ASP reasoning problems into \SAT. In particular, we exhibit a
transformation to \SAT that runs in time $O(2^k n^2)$ where~$k$ is a
structural parameter of the instance and $n$ is the input size of the
instance. Thus the combinatorial explosion, which is expected when
transforming problems from the second level of the Polynomial
Hierarchy to the first level, is confined to the parameter~$k$, while
the running time is polynomial in the input size $n$ and the order of
the polynomial is independent of~$k$. Such transformations are known
as ``\emph{fpt-transformations}'' and form the base of the
completeness theory of \emph{Parameterized
  Complexity}~\cite{DowneyFellows99,FlumGrohe06}.  Our reductions
\emph{break complexity barriers} as they move problems form the second
to the first level of the Polynomial Hierarchy.

It is known that the two reasoning problems, when restricted to
so-called \emph{normal programs}, drop to $\NP$ and
$\coNP$~\cite{BidoitFroidevaux91,MarekTruszczynski91a,MarekTruszczynski91},
respectively.  Hence, it is natural to consider a structural
parameter~$k$ as the distance of a given program from being normal. We
measure the distance in terms of the smallest number of atoms that
need to be deleted to make the program
normal. Following~\citex{WilliamsGomesSelman03} we call such a set of
deleted atoms a \emph{backdoor}. We show that in time $O(2^k n^2)$ we
can solve both of the following two tasks for a given program $P$ of
input size $n$ and an atom $a^*$:

\emph{Backdoor Detection:} Find a backdoor of size at most $k$ of the
given program~$P$, or decide that a backdoor of size $k$ does not
exist.

\emph{Backdoor Evaluation:} Transform the program $P$ into two
propositional formulas $\FBrave(a^*)$ and $\FSkept(a^*)$ such that
(i)~$\FBrave(a^*)$ is satisfiable if and only if $a^*$ is in some
answer set of $P$, and (ii)~$\FSkept(a^*)$ is unsatisfiable if and
only if $a^*$ is in all answer sets of $P$.

\emph{Tightness} is a property of disjunctive logic programs that,
similar to normality, lets the complexities of \textsc{Brave} and
\textsc{Skeptical Reasoning} drop to $\NP$ and $\coNP$,
respectively~\cite{Clark78,Fages94}.  Consequently, one could also
consider backdoors to tightness. We show, however, that the reasoning
problems already reach their full complexities (i.e., completeness for
the second level of the Polynomial Hierarchy) with programs of
distance one from being tight. Hence, an fpt-transformation into \SAT
for programs of distance $k>0$ from being tight is not possible unless
the Polynomial Hierarchy collapses.

\paragraph{Related Work} 
Williams, Gomes, and Selman~\shortcite{WilliamsGomesSelman03}
introduced the notion of backdoors to explain favorable running times
and the heavy-tailed behavior of \SAT and CSP solvers on practical
instances. The parameterized complexity of finding small backdoors was
initiated by Nishimura, Ragde, and
Szeider~\shortcite{NishimuraRagdeSzeider04-informal}. For further
results regarding the parameterized complexity of problems related to
backdoors for \SAT, we refer to a recent survey
paper~\cite{GaspersSzeider12a}. Fichte and
Szeider~\shortcite{FichteSzeider12a} formulated a backdoor approach for
\ASP problems, and obtained complexity results with respect to the
target class of Horn programs and various target classes based on
acyclicity; some results could be generalized~\cite{Fichte12}. Both
papers are limited to target classes where we can enumerate the set of
all answer sets in polynomial time. The results do not carry over to
the present work since here we consider target classes where the
problem of determining an answer set is already \NP-hard.

Translations from \ASP problems to \SAT have been explored by several
authors; existing research mainly focuses on transforming programs for
which the reasoning problems already belong to $\NP$ or $\coNP$. In
particular, translations have been considered for head cycle free
programs \cite{Ben-EliyahuDechter94}, tight programs \cite{Fages94},
and normal programs \cite{LinZhao04a,Janhunen06}.

Some authors have generalized the above translations to capture
programs for which the reasoning problems are outside $\NP$ and
$\coNP$.  Janhunen~et~al.~\shortcite{JanhunenEtAl06} considered
programs where the number of disjunctions in the heads of rules is
bounded. They provided a translation that allows a \SAT encoding of
the test whether a candidate set of atoms is indeed an answer set of
the input program. Lee and Lifschitz~\shortcite{LeeLifschitz03}
considered programs with a bounded number of cycles in the positive
dependency graph. They suggested a translation that, similar to ours,
transforms the input program into an exponentially larger
propositional formula whose satisfying assignments correspond to
answer sets of the program. As pointed out by Lifschitz and
Razborov~\shortcite{LifschitzRazborov06}, this translation produces an
exponential blowup already for normal programs (we note that by way of
contrast, our translation is in fact quadratic for normal programs).

Over the last few years, several \SAT techniques have been integrated
into practical \ASP solvers. In particular, solvers for normal
programs (Cmodels~\cite{GiunchigliaLierlerMaratea06},
ASSAT~\cite{LinZhao04a}, Clasp~\cite{GebserKaufmannNeumannSchaub07})
use certain extensions of Clark's completion and then utilize either
black box \SAT solvers or integrate conflict analysis, backjumping,
and other techniques within the \ASP
context. ClaspD~\cite{DrescherEtAl08} is a disjunctive \ASP-solver
that utilizes nogoods based on the logical characterizations of loop
formulas~\cite{Lee05}.

\section{Preliminaries}

\paragraph{Answer set programs} We consider a universe of
propositional \emph{atoms}. A \emph{disjunctive logic program} (or
simply a \emph{program})~$P$ is a set of \emph{rules} of the form
$x_1\por \dots \por x_l \leftarrow y_1,\dots,y_n,\pnot z_1, \dots,
\pnot z_m \lrsep$ where $x_1,\dots,x_l,$ $y_1,\dots,y_n,$ $z_1,\dots,
z_m$ are atoms and $l,n,m$ are non-negative integers. We write
$H(r)=\{x_1, \dots, x_l\}$ (the \emph{head} of $r$), $B^+(r)= \{y_1,
\dots, y_n\}$ (the \emph{positive body} of $r$), and $B^-(r) = \{z_1,
\dots, z_m\}$ (the \emph{negative body} of $r$). We denote the sets of
atoms occurring in a rule~$r$ or in a program~$P$ by $\at(r)=H(r) \cup
B^+(r) \cup B^-(r)$ and $\at(P)=\bigcup_{r\in P} \at(r)$,
respectively.\longversion{ We abbreviate the number of rules of $P$ by
  $\Card{P} = \Card{\SB r \SM r \in P\SE}$.} A rule~$r$ is
\emph{negation-free} if $B^-(r)=\emptyset$, $r$ is \emph{normal} if
$\Card{H(r)}\leq 1$, $r$ is a \emph{constraint} if $\Card{H(r)}=0$,
$r$ is \emph{constraint-free} if $\Card{H(r)}>0$, $r$ is \emph{Horn}
if it is negation-free and normal, $r$ is \emph{positive} if it is
Horn and constraint-free, and $r$ is \emph{tautological} if $B^+(r)
\cap (H(r) \cup B^-(r))\neq \emptyset$. We say that a program has a
certain property if all its rules have the property. We denote the
class of all normal programs by \Normal and the class of all Horn
programs by \textbf{Horn}. In the following, we restrict ourselves to
programs that do \emph{not} contain any tautological rules. This
restriction is not significant as tautological rules can be omitted
from a program without changing its answer sets~\cite{BrassDix98}.
\longversion{Note that we state explicitly the differences regarding
  tautologies in the proofs.}

A set $M$ of atoms \emph{satisfies} a rule $r$ if \((H(r)\cup B^-(r))
\cap M \neq \emptyset\) or $B^+(r) \setminus M \neq \emptyset$. $M$ is
a \emph{model} of $P$ if it satisfies all rules of $P$. The \emph{GL
  reduct} of a program~$P$ under a set~$M$ of atoms is the program
$P^M$ obtained from $P$ by first, removing all rules~$r$ with
$B^-(r)\cap M\neq \emptyset$ and second, removing all~$\neg z$ where
$z \in B^-(r)$ from all remaining
rules~$r$~\cite{GelfondLifschitz91}. $M$ is an \emph{answer set} (or
\emph{stable set}) of a program~$P$ if $M$ is a minimal model of
$P^M$. The Emden-Kowalski operator of a program~$P$ and a subset~$A$
of atoms of $P$ is the set~$T_P(A):=\SB a \SM a \in H(r), B^+(r)
\subseteq A, r \in P\SE$. The \emph{least model} $LM(P)$ is the least
fixed point of $T_P(A)$~\cite{Van-EmdenKowalski76}. Note that every
positive program~$P$ has a unique minimal model which equals the least
model~$LM(P)$~\cite{GelfondLifschitz88}.

\shortversion{
\begin{EX}\label{ex:running}
  Consider the program $P = \SB a \por c \leftarrow b\rsep b
  \leftarrow c, \pnot g\rsep c \leftarrow a\rsep b \por c \leftarrow e
  \rsep h \por i \leftarrow g, \pnot c \rsep a \por b\rsep g
  \leftarrow \pnot i \rsep c \lrsep\SE$.  The set $A =\{ b, c, g \}$
  is an answer set of $P$ since $P^A= \SB a \por c \leftarrow b \rsep
  c \leftarrow a\rsep b \por c \leftarrow e \rsep a \por b \rsep g
  \rsep c \lrsep\SE$ and the minimal models of $P^A$ are $\{b,c, g\}$
  and $\{a,c,g\}$.
\end{EX}
}
\longversion{
\begin{EX}\label{ex:running}
  Consider the program 
  \begin{align*}
    P = \{ a \por c & \leftarrow b\rsep & b & \leftarrow c, \pnot g
    \rsep & c & \leftarrow a\rsep\\
    b \por c & \leftarrow e \rsep & h \por i & \leftarrow g, \pnot c
    \rsep & a \por b & \rsep \\
    g & \leftarrow \pnot i \rsep &  c & \lrsep & &\qquad\}.
  \end{align*}
  The set $A =\{ b, c, g \}$ is an answer set of $P$ since $P^A= \SB a
  \por c \leftarrow b \rsep c \leftarrow a\rsep b \por c \leftarrow e
  \rsep a \por b \rsep g \rsep c \lrsep\SE$ and the minimal models of
  $P^A$ are $\{b,c, g\}$ and $\{a,c,g\}$.
\end{EX}
}

The main reasoning problems for \ASP are \textsc{Brave Reasoning}
(given a program $P$ and an atom $a\in \at(P)$, is $a$ contained in
some answer set of $P$?)  and \textsc{Skeptical Reasoning} (given a
program $P$ and an atom $a\in \at(P)$, is $a$ contained in all answer
sets of $P$?).  \textsc{Brave Reasoning} is $\Sigma^P_2$-complete,
\textsc{Skeptical Reasoning} is
$\Pi^P_2$-complete~\cite{EiterGottlob95}.

\paragraph{Parameterized Complexity} We give some basic background on
parameterized complexity. For more detailed information we refer to
other
sources~\longversion{\cite{DowneyFellows99,FlumGrohe06,GottlobSzeider08,Niedermeier06}}\shortversion{\cite{DowneyFellows99,FlumGrohe06}}.
A \emph{parameterized problem} $L$ is a subset of $\Sigma^* \times
\Nat$ for some finite alphabet $\Sigma$. For an instance~$(I,k) \in
\Sigma^* \times \Nat$ we call $I$ the \emph{main part} and $k$ the
\emph{parameter}. $L$ is \emph{fixed-parameter tractable} if there
exists a computable function $f$ and a constant~$c$ such that there
exists an algorithm that decides whether $(I,k)\in L$ in
time~$O(f(k)\CCard{I}^c)$ where $\CCard{I}$ denotes the size of~$I$.
Such an algorithm is called an \emph{fpt-algorithm}.  $\FPT$ is the
class of all fixed-parameter tractable decision problems.

Let $L \subseteq \Sigma^* \times \Nat$ and $L' \subseteq
\Sigma'^*\times \Nat$ be two parameterized problems for some finite
alphabets $\Sigma$ and $\Sigma'$. An \emph{fpt-reduction} $r$ from $L$
to $L'$ is a many-to-one reduction from $\Sigma^*\times \Nat$ to
$\Sigma'^*\times \Nat$ such that for all $I \in \Sigma^*$ we have
$(I,k) \in L$ if and only if $r(I,k)=(I',k')\in L'$ such that $k' \leq
g(k)$ for a fixed computable function $g: \Nat \rightarrow \Nat$ and
there is a computable function $f$ and a constant $c$ such that $r$ is
computable in time $O(f(k)\CCard{I}^c)$ where $\CCard{I}$ denotes the
size of~$I$~\cite{FlumGrohe06}. Thus, an fpt-reduction is, in
particular, an fpt-algorithm. It is easy to see that the class~$\FPT$
is closed under fpt-reductions. We would like to note that the theory
of fixed-parameter intractability is based on
fpt-reductions~\longversion{\cite{DowneyFellows99,FlumGrohe06}}\shortversion{\cite{FlumGrohe06}}.

\paragraph{Propositional satisfiability} A \emph{truth assignment} is
a mapping $\tau:\backdoorX\rightarrow \{0,1\}$ defined for a set
$\backdoorX$ of atoms. For $x\in \backdoorX$ we put $\tau(\pnot x)=1 -
\tau(x)$. By $\ta{\backdoorX}$ we denote the set of all truth
assignments $\tau:\backdoorX\rightarrow \{0,1\}$.  We usually say
\emph{variable} instead of atom in the context of formulas. Given a
propositional formula $F$, the problem \SAT asks whether $F$ is
satisfiable. We can consider \SAT as a parameterized problem by simply
associating with every formula the parameter~$0$.

\section{Backdoors of Programs} 

In the following we give the main notions concerning backdoors for
answer set programming, as introduced by Fichte and
Szeider~\shortcite{FichteSzeider12a}. Let $P$ be a program,
$\backdoorX$ a set of atoms, and $\tau\in \ta{\backdoorX}$. The
\emph{truth assignment reduct} of $P$ under $\tau$ is the logic
program $P_\tau$ obtained from $P$ by removing all rules $r$ for which
at least one of the following holds: (i)~$H(r)\cap \tau^{-1}(1)\neq
\emptyset$, (ii)~$H(r)\subseteq \backdoorX$, (iii)~$B^+(r) \cap
\tau^{-1}(0)\neq \emptyset$, and (iv)~$B^-(r) \cap \tau^{-1}(1)\neq
\emptyset$, and then removing from the heads and bodies of the
remaining rules all literals $v,\pnot v$ with $v\in \backdoorX$. In
the following, let $\CCC$ be a class of programs. We call $\CCC$ to be
\emph{rule induced} if for each $P\in \CCC$, $P'\subseteq P$ implies
$P'\in \CCC$. A set $\backdoorX$ of atoms is a \emph{\strongBds{\CCC}}
of a program $P$ if $P_{\tau}\in \CCC$ for all truth assignments $\tau
\in \ta{\backdoorX}$. 
Given a \strongBds{\CCC}~$X$ of a program $P$, the answer sets of $P$
are among the answer sets we obtain from the truth assignment reducts
$P_\tau$ where $\tau \in X$, more formally $\stableset(P) \subseteq
\SB M\cup \tau^{-1}(1) \SM \tau\in \ta{X\cap\, \at(P)}, M \in
\stableset(P_\tau)\SE$ where $\stableset(P)$ denotes the set of all
answer sets of~$P$.
For a program~$P$ and a set $\backdoorX$ of
atoms we define $P-\backdoorX$ as the program obtained from $P$ by
deleting all atoms contained in $\backdoorX$ and their negations from
the heads and bodies of all the rules of~$P$. A set $\backdoorX$ of
atoms is a \emph{\delBds{\CCC}} of a program $P$ if $P-\backdoorX \in
\CCC$.

\begin{EX}\label{ex:backdoor}
Consider the program $P$ from Example~\ref{ex:running}. The set
$\backdoorX=\{ b,c,h \}$ is a \strongBds{\Normal} since the truth assignment
reducts
\(
	P_{b=0,c=0,h=0}=P_{000}=\SB  i \leftarrow g \rsep a \rsep g \lrsep \leftarrow \pnot i \SE
\),
\(
    P_{001}=P_{010}=P_{011}=P_{101} = \SB a \rsep g \leftarrow \pnot i \lrsep \SE
\),
\(
    P_{100}=\SB a \rsep i \leftarrow g  \rsep g \leftarrow \pnot i \lrsep \SE
\), and
\(
    P_{110}=P_{111}=\SB g \leftarrow \pnot i \lrsep \SE
\)	 
are in the class \Normal.
\end{EX}

In the following we refer to $\CCC$ as the \emph{target class} of the
backdoor. For most target classes $\CCC$, \delBds{\CCC}s are
\strongBds{\CCC}s. For $\CCC=\Normal$ even the opposite direction is true.

\begin{PRO}[\citey{FichteSzeider12a}]\label{lem:rule-induced} 
  If $\CCC$ is rule induced, then every \delBds{\CCC} is a
  \strongBds{\CCC}.
\end{PRO}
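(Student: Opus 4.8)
The plan is to reduce the statement to the rule-induced hypothesis by showing that, for every truth assignment over $\backdoorX$, the truth assignment reduct is already contained---as a set of rules---in the deletion reduct. Concretely, I would fix an arbitrary $\tau\in\ta{\backdoorX}$ and establish the inclusion $P_\tau \subseteq P-\backdoorX$.

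The key observation I would make first is that both constructions apply \emph{literally the same} local operation to a rule of $P$: they delete every atom of $\backdoorX$ together with its negation from the head, the positive body, and the negative body. The only difference lies in which rules they act on. The deletion reduct $P-\backdoorX$ applies this deletion to every rule of $P$, whereas $P_\tau$ first discards those rules $r$ satisfying one of the conditions (i)--(iv) and only then deletes the $\backdoorX$-literals from the surviving rules. Consequently, each rule of $P_\tau$ arises from some rule $r\in P$ that survived the discarding step, and the form it takes after deleting the $\backdoorX$-literals is exactly the form that the same rule $r$ takes in $P-\backdoorX$. Hence every rule of $P_\tau$ is also a rule of $P-\backdoorX$, which gives the desired inclusion.

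With the inclusion in hand the argument closes quickly. Since $\backdoorX$ is a \delBds{\CCC} of $P$ we have $P-\backdoorX\in\CCC$, and because $\CCC$ is rule induced the sub-program $P_\tau$ lies in $\CCC$ as well. As $\tau$ was chosen arbitrarily, $P_\tau\in\CCC$ holds for all $\tau\in\ta{\backdoorX}$, which is precisely the definition of $\backdoorX$ being a \strongBds{\CCC} of $P$.

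The one point that requires care---and thus the main obstacle---is verifying that the reduced form of a surviving rule in $P_\tau$ coincides verbatim with its reduced form in $P-\backdoorX$, so that the containment genuinely holds at the level of rule sets rather than merely rule-by-rule up to relabelling. This follows from a direct comparison of the two deletion operations as stated in the definitions, but it is the step on which the whole argument hinges.
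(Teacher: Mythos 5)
Your proof is correct: the inclusion $P_\tau \subseteq P-\backdoorX$ holds for every $\tau\in\ta{\backdoorX}$ exactly as you argue, since a rule surviving conditions (i)--(iv) is transformed by the same literal-deletion operation in both constructions, and rule-inducedness then transfers $P-\backdoorX\in\CCC$ to each $P_\tau$. The paper itself states this proposition without proof, citing Fichte and Szeider (2012), and your argument is precisely the standard one for this fact, so there is nothing to flag.
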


\begin{LEM}\label{lem:normal-strong-deletion}
Let $P$ be a program. A set $X$ is a \strongBds{\Normal} of a program~$P$
if and only if it is a \delBds{\Normal} of~$P$.
\end{LEM}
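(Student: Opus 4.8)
The plan is to prove the two directions separately. The implication from \delBds{\Normal} to \strongBds{\Normal} is immediate from the earlier proposition, whereas the reverse implication is where the real work lies and is handled by contraposition.

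For the first direction, that every \delBds{\Normal} is a \strongBds{\Normal}, I would observe that the class \Normal is rule induced: normality is a property of individual rules (each rule $r$ satisfies $\Card{H(r)}\leq 1$) and is therefore inherited by every subset of a normal program. Proposition~\ref{lem:rule-induced} then applies directly and yields this direction with no further effort.

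For the converse, that every \strongBds{\Normal} is a \delBds{\Normal}, I would argue by contraposition. Suppose $X$ is not a \delBds{\Normal} of $P$, so $P-X\notin\Normal$. Then $P$ contains a rule $r$ whose head, after deleting the atoms of $X$, still has at least two atoms, i.e., $\Card{H(r)\setminus X}\geq 2$. The goal is to exhibit a single truth assignment $\tau\in\ta{X}$ for which the reduct $P_\tau$ is not normal, witnessing that $X$ is not a \strongBds{\Normal}. To this end I would set $\tau(v)=0$ for every $v\in X\cap(H(r)\cup B^-(r))$ and $\tau(v)=1$ for every $v\in X\cap B^+(r)$, assigning the remaining atoms of $X$ arbitrarily. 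The step I expect to require the most care is checking that this assignment is well defined: a conflict could only arise for an atom lying in $B^+(r)$ and simultaneously in $H(r)\cup B^-(r)$, but such an atom would make $r$ tautological, and we have assumed throughout that $P$ contains no tautological rules, so $B^+(r)\cap(H(r)\cup B^-(r))=\emptyset$. Atoms in $H(r)\cap B^-(r)$ cause no difficulty, since both memberships demand the value $0$.

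Finally, I would verify that $r$ survives all four deletion conditions in the definition of $P_\tau$: condition~(i) fails because every head atom of $r$ in $X$ is mapped to $0$; condition~(ii) fails because $H(r)\setminus X\neq\emptyset$; condition~(iii) fails because every positive-body atom of $r$ in $X$ is mapped to $1$; and condition~(iv) fails because every negative-body atom of $r$ in $X$ is mapped to $0$. Hence $r$ contributes to $P_\tau$ a rule whose head is exactly $H(r)\setminus X$, of size at least two, so $P_\tau\notin\Normal$ and $X$ is not a \strongBds{\Normal}. Combining the two directions yields the claimed equivalence.
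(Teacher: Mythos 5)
Your proof is correct and follows essentially the same approach as the paper: the first direction invokes rule-inducedness via Proposition~\ref{lem:rule-induced}, and for the converse you construct exactly the paper's assignment $\tau$ (atoms of $X$ in $H(r)\cup B^-(r)$ set to $0$, those in $B^+(r)$ set to $1$), with well-definedness resting on the absence of tautological rules. The only differences are cosmetic---you phrase the converse as a contrapositive rather than arguing directly from a rule of $P-X$, and you spell out explicitly why the witnessing rule survives the four removal conditions of the truth assignment reduct, a verification the paper compresses into ``it remains to observe that $r'\in P_\tau$.''
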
 
\shortversion{
\begin{proof}
  We observe that the class of all normal programs is
  rule-induced. Thus the if direction holds by
  Proposition~\ref{lem:rule-induced}. We proceed to show the only-if
  direction. Assume $X$ is a strong $\Normal$-backdoor of $P$.
  Consider a rule $r'\in P-\backdoorX$ which is not tautological. Let
  $r\in P$ be a rule from which $r'$ was obtained in forming
  $P-\backdoorX$. We define $\tau\in \ta{X}$ by setting all atoms in
  $H(r) \cup B^-(r)$ to 0, all atoms in $B^+(r)$ to 1, and all
  remaining atoms in $\backdoorX\setminus \at(r)$ arbitrarily to 0 or
  1. Since $r$ is not tautological, this definition of $\tau$ is
  sound. It remains to observe that $r'\in P_\tau$. Since $\backdoorX$
  is a \strongBds{\Normal} of $P$, the rule $r'$ is normal. Hence the
  lemma follows.
\end{proof}
} 

\longversion{
\begin{proof}
We observe that the class of all normal programs is rule-induced. Thus
the if direction holds by Proposition~\ref{lem:rule-induced}. We
proceed to show the only-if direction. Assume $X$ is a strong
$\Normal$-backdoor of $P$.  Consider a rule $r'\in P-\backdoorX$ which
is not tautological. Let $r\in P$ be a rule from which $r'$ was
obtained in forming $P-\backdoorX$. We define $\tau\in \ta{X}$ by
setting all atoms in $H(r) \cup B^-(r)$ to 0, all atoms in $B^+(r)$ to
1, and all remaining atoms in $\backdoorX\setminus \at(r)$ arbitrarily
to 0 or 1. Since $r$ is not tautological, this definition of $\tau$ is
sound. It remains to observe that $r'\in P_\tau$. Since $\backdoorX$
is a \strongBds{\Normal} of $P$, the rule $r'$ is normal. Hence, the
lemma follows.
\end{proof}
}

Each target class $\CCC$ gives rise to the following problems:
\shortversion{
\begin{itemize}[leftmargin=*]
\item \BdCheck{\CCC}: \emph{Given} a program~$P$, a \strongBds{\CCC}
  $\backdoorX$ of $P$, a set~$\modelM\subseteq \at(P)$, and the
  parameter size of the backdoor~$k=\Card{\backdoorX}$.  \emph{Is}
  $\modelM$ an answer set of $P$?
\item \BdBrave{\CCC}: \emph{Given} a program~$P$, a \strongBds{\CCC}
  $\backdoorX$ of $P$, an atom~$a^*\in \at(P)$, and the parameter size
  of the backdoor~$k=\Card{\backdoorX}$. Does $a^*$ belong to
  \emph{some} answer set \mbox{of $P$?}

\item \BdSkept{\CCC}: \emph{Given}
  a program~$P$, a \strongBds{\CCC} $\backdoorX$ of $P$, an
  atom~$a^*\in \at(P)$, and the parameter size of the backdoor
  $k=\Card{\backdoorX}$. Does $a^*$ belong to \emph{all} answer sets
  of $P$?
\end{itemize}} 
\longversion{ \pproblem{\BdCheck{\CCC}}{A program~$P$, a
    \strongBds{\CCC} $\backdoorX$ of $P$, a set~$\modelM\subseteq
    \at(P)$, and the size of the backdoor~$k=\Card{\backdoorX}$.}{The
    integer~$k$.}{\emph{Is} $\modelM$ an answer set of $P$?}

  \pproblem{\BdBrave{\CCC}}{A program~$P$, a \strongBds{\CCC}
    $\backdoorX$ of $P$, an atom~$a^*\in \at(P)$, and the size of the
    backdoor~$k=\Card{\backdoorX}$.}{The integer~$k$.}{Does $a^*$
    belong to \emph{some} answer set \mbox{of $P$?}  }

  \pproblem{\BdSkept{\CCC}}{A program~$P$, a \strongBds{\CCC}
    $\backdoorX$ of $P$, an atom~$a^*\in \at(P)$, and the size of the
    backdoor $k=\Card{\backdoorX}$.}{The integer $k$.}{Does $a^*$
    belong to \emph{all} answer sets of $P$?}  }
Problems for {\it\delBds{\CCC}s} can be defined similarly.

\section{Using Backdoors}

In this section, we show results regarding the use of backdoors with
respect to the target class $\Normal$. 

\begin{THE}\label{the:fpt-check}
The problem \BdCheck{\ensuremath{\Normal}} is fixed-parameter tractable.
More specifically, given a program $P$ of input size $n$, a
\strongBds{ \ensuremath{\Normal}} $N$ of $P$ of size $k$, and a set
$\modelM\subseteq \at(P)$ of atoms, we can check in time $O(2^kn)$
whether $\modelM$ is an answer set of $P$.
\end{THE}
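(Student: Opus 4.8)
The plan is to decompose the answer-set test for $M$ into an (easy) model test and the genuinely hard minimality test, and then to exploit the backdoor $N$ to make the latter fixed-parameter tractable. Recall that $M$ is an answer set of $P$ precisely when $M$ is a model of $P$ (equivalently, of the GL reduct $P^M$) and $M$ is a \emph{minimal} model of $P^M$; it is the minimality requirement that is the source of the co-NP-hardness of answer-set checking for disjunctive programs, so this is where the $2^k$ factor will have to be spent.

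First I would verify in time $O(n)$ that $M$ is a model of $P$, rejecting otherwise, and compute the GL reduct $P^M$, which is negation-free. By Lemma~\ref{lem:normal-strong-deletion}, $N$ is also a \delBds{\Normal}, so $P-N$ is normal; since forming the GL reduct leaves all heads unchanged, $P^M-N$ is normal as well, and being negation-free it is in fact \textbf{Horn} (allowing constraints). This is the structural fact that makes the branching below efficient.

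The heart of the argument is the minimality test, carried out by branching over the $N$-part of a hypothetical smaller model. For each $S\subseteq M\cap N$ — there are at most $2^{\Card{M\cap N}}\le 2^k$ of them — I would fix the backdoor atoms by setting those in $S$ to true and those in $N\setminus S$ to false, and reduce $P^M$ accordingly to a program $Q_S$ on the atoms $\at(P)\setminus N$. As just argued $Q_S$ is Horn, so its constraint-free part $Q_S^+$ has a least model $L_S = LM(Q_S^+)$, computable in time $O(n)$; every model of $Q_S$ contains $L_S$, and $m_S := S\cup L_S$ is a model of $P^M$ iff $L_S$ satisfies the constraints of $Q_S$. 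I would then test whether $m_S\subseteq M$ and $m_S\neq M$. The algorithm reports that $M$ is an answer set iff $M$ is a model of $P$ and no $S$ produces such a proper submodel $m_S\subsetneq M$.

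The main obstacle — and the step needing the most care — is the correctness of this branching: I must show that an arbitrary model $M'\subsetneq M$ of $P^M$ exists if and only if some $m_S$ witnesses it. The key observation is that a model $M'$ of $P^M$ with $M'\cap N=S$ and $M'\subseteq M$ corresponds exactly to a model $M'\setminus N$ of $Q_S$, and because $Q_S$ is Horn every such model contains $L_S$; hence the pointwise-smallest candidate with $N$-part $S$ is $m_S$. If $m_S$ fails to be a model contained in $M$ — because $L_S$ violates a constraint (then \emph{no} model with $N$-part $S$ exists) or because $L_S\not\subseteq M$ (then no such model lies inside $M$) — then no model with $N$-part $S$ lies strictly below $M$ either. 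This reduces the existence of a smaller model to the $2^k$ explicit checks. Since each check computes one least model and performs a linear scan in $O(n)$ time, the total running time is $O(2^k n)$, as claimed.
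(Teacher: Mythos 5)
Your proof is correct and takes essentially the same approach as the paper's: branching over all $2^k$ subsets $S \subseteq M \cap N$ of the backdoor, reducing $P^M$ under each $S$ to a Horn program, computing its least model in linear time via Dowling--Gallier, and testing whether the canonical candidate $S \cup L_S$ is a proper submodel of $M$ --- this is precisely the paper's algorithm \MinCheck together with Lemma~\ref{lem:soundness}. The only cosmetic differences are that you delete rules whose positive body meets $N \setminus S$ (the paper keeps such rules with those never-derivable atoms in the body, which is equivalent) and that you merge the paper's separate conditions (b)--(d) into the single check that $m_S$ is a model with $m_S \subsetneq M$.
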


The most important part for establishing Theorem~\ref{the:fpt-check} is to
check whether a model is a minimal model. In general, this is a $\coNP$\hy
complete task, but in the context of Theorem~\ref{the:fpt-check} we can
achieve fixed-parameter tractability based on the following construction and
lemma.

Let $P$ be a given program, $\backdoorX$ a \strongBds{ \ensuremath{
    \Normal} } of $P$ of size $k$, and let $\modelM\subseteq
\at(P)$. For a set $\backdoorX_1\subseteq \modelM \cap \backdoorX$ we
construct a program $P_{\backdoorX_1 \subseteq \backdoorX}$ as
follows:
%
(i)~remove all rules~$r$ for which $H(r)\cap \backdoorX_1\neq
\emptyset$ and
(ii)~replace for all remaining rules~$r$ the head~$H(r)$ with
$H(r)\setminus \backdoorX$ and the positive body~$B^+(r)$ with
$B^+(r)\setminus \backdoorX_1$.

Recall that by definition we exclude programs with tautological
rules. Since $\backdoorX$ is a \strongBds{\ensuremath{\Normal}} of
$P$, it is also a \delBds{\ensuremath{\Normal}} of $P$ by
Lemma~\ref{lem:normal-strong-deletion}. Hence $P-\backdoorX$ is
normal. Let $r$ be an arbitrarily chosen rule in $P$. Then there is a
corresponding rule $r' \in P-\backdoorX$ and a corresponding rule~$r''
\in P_{\backdoorX_1 \subseteq \backdoorX}$. Since we remove in both
constructions exactly the same literals from the head of every rule,
$H(r')=H(r'')$ holds. Consequently, $P_{\backdoorX_1 \subseteq
  \backdoorX}$ is normal and $P^\modelM_{\backdoorX_1 \subseteq
  \backdoorX}$ is Horn (here $P_{\backdoorX_1\subseteq
  \backdoorX}^\modelM$ denotes the GL-reduct of
$P_{\backdoorX_1\subseteq \backdoorX}$ under~$\modelM$).

For any program $P'$ let $\Constraints(P')$ denote the set of constrains of
$P'$ and $\DH(P') = P' \setminus \Constraints(P')$. If $P'$ is Horn,
$\DH(P')$ has a least model $L$ and $P'$ has a model if and only if $L$ is a
model of $\Constraints(P')$~\cite{DowlingGallier84}.

\newcommand{\AlgTrue}{\textit{True}\xspace}
\newcommand{\AlgFalse}{\textit{False}\xspace}

Let $X$ be a \strongBds{\Normal} of $P$ and $X_1 \subseteq X$. Given
$\modelM\subseteq \at(P)$, the algorithm $\MinCheck(X_1)$ below
performs the following steps:

\shortversion{\begin{quote}}
\begin{enumerate}[itemsep=0pt,leftmargin=*]
\item Return \AlgTrue if $X_1$ is not a subset of $M$.
\item Compute the Horn program $P^\modelM_{\backdoorX_1\subseteq
\backdoorX}$.

\item Compute the least model $L$ of $\DH(
P^\modelM_{\backdoorX_1 \subseteq \backdoorX})$.

\item\label{alg:step-check} 
    Return \AlgTrue if at least one of the following conditions
    holds:

  \begin{enumerate}

  \item[(a)]\label{cond:notsubmod} $L$ is not a model of
    $\Constraints( P^\modelM_{\backdoorX_1 \subseteq \backdoorX})$.
  \item[(b)]\label{cond:minus} $L$ is not a subset of $\backdoorX$,
  \item[(c)] $L \cup \backdoorX_1$ is not a proper subset of $\modelM$,
  \item[(d)]\label{cond:notmodel} $L \cup \backdoorX_1$ is not a model
    of $P^\modelM$.
  \end{enumerate}
  \item Otherwise return \AlgFalse.
\end{enumerate}
\shortversion{\end{quote}}

\begin{LEM}\label{lem:soundness}  
  Let $X$ be a \strongBds{\Normal}. A model $\modelM\subseteq \at(P)$
  of $P^\modelM$ is a minimal model of $P^\modelM$ if and only if
  $\MinCheck(X_1)$ returns \AlgTrue for each set $X_1\subseteq X$.
\end{LEM}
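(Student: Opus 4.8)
The plan is to prove the statement in contrapositive form on both sides: \emph{$M$ is \emph{not} a minimal model of $P^M$ if and only if $\MinCheck(X_1)$ returns \AlgFalse{} for at least one $X_1\subseteq X$.} Unfolding the algorithm, returning \AlgFalse{} means that step~1 does not fire, so $X_1\subseteq M$, and that none of the conditions (a)--(d) of step~4 holds. The useful consequences of this are the negation of~(c), namely that $L\cup X_1$ is a proper subset of $M$, and the negation of~(d), namely that $L\cup X_1$ is a model of $P^M$. Thus returning \AlgFalse{} exactly exhibits a proper submodel of $M$, and the whole proof reduces to showing that such a witness can be produced from the cheap least-model computation precisely when one exists.

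The key tool I would establish first is a rule-by-rule \emph{correspondence}: for every $X_1\subseteq X$ and every $N\subseteq \at(P)\setminus X$, the set $N\cup X_1$ is a model of $P^M$ if and only if $N$ is a model of $P^M_{X_1\subseteq X}$. To prove it, think of the backdoor as assigning $1$ to the atoms of $X_1$ and $0$ to those of $X\setminus X_1$, and compare the two programs rule by rule. A rule $r$ with $B^-(r)\cap M\neq\emptyset$ disappears from both $P^M$ and $P^M_{X_1\subseteq X}$; a rule with $H(r)\cap X_1\neq\emptyset$ is deleted from $P_{X_1\subseteq X}$, while its image in $P^M$ is trivially satisfied by $N\cup X_1$ because its head meets $X_1$. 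For every surviving rule, since $N\cap X=\emptyset$, the body $B^+(r)$ is contained in $N\cup X_1$ exactly when $B^+(r)\cap(X\setminus X_1)=\emptyset$ and $B^+(r)\setminus X\subseteq N$, which is precisely the condition for the stripped body $B^+(r)\setminus X_1$ to be contained in $N$; and the stripped head $H(r)\setminus X$ meets $N$ exactly when $H(r)$ meets $N\cup X_1$. Hence corresponding rules are satisfied simultaneously. Because $X$ is also a \delBds{\Normal} by Lemma~\ref{lem:normal-strong-deletion}, the stripped rules have at most one head atom and, after the GL reduct, no negative body, so $P^M_{X_1\subseteq X}$ is Horn; its positive part $\DH(P^M_{X_1\subseteq X})$ thus has a least model $L$, and since no atom of $X$ ever occurs in a head, $L\cap X=\emptyset$.

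Soundness is then immediate: if $\MinCheck(X_1)$ returns \AlgFalse{}, the failure of~(d) gives that $L\cup X_1$ is a model of $P^M$ and the failure of~(c) gives $L\cup X_1\subsetneq M$, so $M$ is not minimal. For completeness, assume $M$ is not minimal and fix any model $M'\subsetneq M$ of $P^M$; put $X_1=M'\cap X$ and $N'=M'\setminus X$. Then $X_1\subseteq M$, so step~1 does not fire. By the correspondence $N'$ is a model of $P^M_{X_1\subseteq X}$, hence of $\DH(P^M_{X_1\subseteq X})$, so minimality of the least model yields $L\subseteq N'\subseteq M\setminus X$, which falsifies~(b) (read as ``$L\not\subseteq M\setminus X$''). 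Since $P^M_{X_1\subseteq X}$ has the model $N'$, the cited property of Horn programs shows that $L$ satisfies $\Constraints(P^M_{X_1\subseteq X})$, falsifying~(a). Applying the correspondence to $L$ gives that $L\cup X_1$ is a model of $P^M$, falsifying~(d); and $L\cup X_1\subseteq N'\cup X_1=M'\subsetneq M$ falsifies~(c). Thus $\MinCheck(X_1)$ returns \AlgFalse{}.

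I expect the main obstacle to be the correspondence itself, specifically verifying that the positive-body occurrences of atoms in $X\setminus X_1$, which the construction of $P_{X_1\subseteq X}$ deliberately leaves in place, really do behave as false (they never appear in a head and hence are never derived into $L$) and so neither block the least-model computation nor break the equivalence, together with threading the GL reduct through both constructions so that the same rules survive on both sides. A secondary subtlety worth isolating is the monotonicity step in completeness: because the least model $L$ sits inside the genuine submodel $N'$, it automatically satisfies every constraint, which is exactly what lets the inexpensive least-model computation legitimately replace a search over all submodels.
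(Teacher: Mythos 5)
Your proof is correct and takes essentially the same route as the paper's: the same witness $X_1 = M' \cap X$, the same rule-by-rule transfer between $P^M$ and $P^M_{X_1\subseteq X}$ (which you merely package as a single bidirectional correspondence lemma, where the paper proves the two needed directions inline, namely that $M'\setminus X$ models $P^M_{X_1\subseteq X}$ and that $L\cup X_1$ models $P^M$), and the same appeal to the Dowling--Gallier least-model property to discharge conditions (a)--(d). Your reading of condition (b) as $L \not\subseteq M\setminus X$ is also the intended one; the algorithm listing's ``not a subset of $\backdoorX$'' is a slip, as both the formula $F^{\text{(b)}}_{i}$ and the paper's own proof confirm.
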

\shortversion{
  \noindent Because of space constraints we omit the lengthy proof.  }
\longversion{
  \begin{proof}
    ($\Rightarrow$). Assume that $\modelM$ is a minimal model of
    $P^\modelM$, and suppose to the contrary that there is some
    $\backdoorX_1\subseteq \modelM \cap \backdoorX$ for which the
    algorithm returns \AlgFalse.  Consequently, none of the conditions
    in Step~4 of the algorithms holds.  That means, the least model
    $L$ of $P^\modelM_{ \backdoorX_1 \subseteq \backdoorX}$ satisfies
    $\Constraints(P^\modelM_{\backdoorX_1\subseteq \backdoorX})$ and
    is therefore a model of $P^\modelM_{ \backdoorX_1 \subseteq
      \backdoorX}$. Moreover, since $L \cup \backdoorX_1 \nsubseteq
    \modelM$ and $L \cup \backdoorX_1$ is a model of $P^\modelM$,
    $\modelM$ cannot be a minimal model of $P^\modelM$, a
    contradiction to our assumption. So we conclude that the algorithm
    succeeds and the only-if direction of the lemma is shown.

    ($\Leftarrow$). Assume that the algorithm returns \AlgTrue for
    each $\backdoorX_1 \subseteq \modelM \cap \backdoorX$. We show
    that $\modelM$ is a minimal model of $P^\modelM$. Suppose to the
    contrary that $P^\modelM$ has a model $\modelM'\subsetneq
    \modelM$.

    We run the algorithm for $\backdoorX_1:=\modelM' \cap
    \backdoorX$. Let $L$ be the least model of
    $\DH(P^\modelM_{\backdoorX_1\subseteq \backdoorX})$.  By
    assumption, the algorithm returns \AlgTrue, hence some of the
    conditions of Step~4 of the algorithm must hold for~$L$. We will
    show, however, that none of the conditions can hold, which will
    yield to a contradiction, and so establish the if direction of the
    lemma, and thus completes its proof.

    First we show that $\modelM' \setminus \backdoorX$ is a model of
    $P^\modelM_{\backdoorX_1\subseteq \backdoorX}$.  Consider a rule
    $r'\in P^\modelM_{\backdoorX_1\subseteq \backdoorX}$ and let $r\in
    P^\modelM$ such that $r'$ is obtained form $r$ by removing
    $\backdoorX$ from $H(r)$ and by removing $\backdoorX_1$ from
    $B^+(r)$.  Since $\modelM'$ is a model of $P^\modelM$, we have
    (i)~$B^+(r) \setminus \modelM' \neq\emptyset$ or (ii)~$H(r)\cap
    \modelM'\neq \emptyset$. Moreover, since $B^+(r')=B^+(r)\setminus
    \backdoorX_1$ and $\backdoorX_1=\modelM'\cap \backdoorX$,
    (i)~implies $\emptyset \neq B^+(r) \setminus \modelM' = B^+(r)
    \setminus \backdoorX_1 \setminus \modelM' = B^+(r')\setminus
    \modelM' \subseteq B^+(r') \setminus (\modelM' \setminus
    \backdoorX)$, and since $H(r)\cap\backdoorX_1=\emptyset$,
    (ii)~implies $\emptyset \neq H(r)\cap\modelM'= H(r)\cap
    (\modelM'\setminus \backdoorX_1)= H(r)\cap (\modelM'\setminus
    \backdoorX)= (H(r)\setminus \backdoorX) \cap (\modelM'\setminus
    \backdoorX)= H(r') \cap (\modelM'\setminus \backdoorX)$.  Hence
    $\modelM' \setminus \backdoorX$ satisfies $r'$. Since $r'\in
    P^\modelM_{\backdoorX_1\subseteq \backdoorX}$ was chosen
    arbitrarily, we conclude that $\modelM' \setminus \backdoorX$ is a
    model of $P^\modelM_{\backdoorX_1\subseteq \backdoorX}$.

    Since $P^\modelM_{\backdoorX_1\subseteq \backdoorX}$ has some
    model (namely $\modelM' \setminus \backdoorX$), the least model
    $L$ of $\DH(P^\modelM_{\backdoorX_1\subseteq \backdoorX})$ must be
    a model of $P^\modelM_{\backdoorX_1\subseteq \backdoorX}$, thus
    Condition~(a) cannot hold for $L$.

    Next we show that the other conditions cannot hold either.
    Since $\modelM'\setminus \backdoorX$ is a model of
    $P^\modelM_{\backdoorX_1 \subseteq \backdoorX}$, as shown above,
    we have $L\subseteq \modelM'\setminus \backdoorX$.
    %
    We obtain $L \subseteq \modelM \setminus \backdoorX$ since
    $\modelM' \setminus \backdoorX \subseteq \modelM \setminus
    \backdoorX$.
    %
    Further, we obtain $L \cup \backdoorX_1 \subsetneq \modelM$ since
    $L \cup \backdoorX_1 \subseteq (\modelM' \setminus \backdoorX)
    \cup \backdoorX_1 = (\modelM' \setminus \backdoorX) \cup
    (\modelM'\cap \backdoorX) =\modelM' \subsetneq \modelM$.
    Hence we have excluded Conditions~(b) and (c), and it remains to
    exclude Condition~(d).

    %
    Consider a rule $r\in P^M$. If $\backdoorX_1 \cap H(r)\neq
    \emptyset$, then $L\cup \backdoorX_1$ satisfies $r$; thus it
    remains to consider the case $\backdoorX_1 \cap H(r) = \emptyset$.
    In this case there is a rule $r'\in P^\modelM_{\backdoorX_1
      \subseteq \backdoorX}$ with $H(r')=H(r)\setminus \backdoorX$ and
    $B^+(r')=B^+(r)\setminus \backdoorX_1$. Since $L$ is a model of
    $P^\modelM_{\backdoorX_1 \subseteq \backdoorX}$, $L$ satisfies
    $r'$. Hence (i)~$B^+(r')\setminus L \neq\emptyset $ or
    (ii)~$H(r')\cap L \neq \emptyset$.  Since $B^+(r')=B^+(r)\setminus
    \backdoorX_1$, (i)~implies that $B^+(r)\setminus ( L\cup
    \backdoorX_1)\neq \emptyset$; and since $H(r')\subseteq H(r)$,
    (ii)~implies that $H(r)\cap (L \cup \backdoorX_1) \neq
    \emptyset$. Thus $L \cup \backdoorX_1$ satisfies~$r$. Since $r\in
    P^M$ was chosen arbitrarily, we conclude that $L \cup
    \backdoorX_1$ is a model of~$P^M$, which excludes also the last
    Condition~(d).
\end{proof}
}

We are now in a position to establish Theorem~\ref{the:fpt-check}.

\shortversion{
\begin{proof}[Proof of Theorem~\ref{the:fpt-check}]
  First, we check whether $\modelM$ is a model of $P^\modelM$. If it
  is, we run the algorithm $\MinCheck(X_1)$ for each set
  $\backdoorX_1\subseteq \backdoorX$. By Lemma~\ref{lem:soundness} the
  algorithm decides whether $\modelM$ is an answer set of $P$. The
  check of whether $\modelM$ is a model of $P^\modelM$ can clearly be
  carried out in linear time. The algorithm $\MinCheck(X_1)$ runs in
  linear time since we can compute the least model of a Horn program
  in linear time~\cite{DowlingGallier84}. As there are at most $2^k$
  sets $\backdoorX_1$ to consider, the total running time is $O(2^k
  n)$ where $n$ denotes the input size of $P$ and $k=\Card{X}$.  We
  obtain fixed-parameter tractability for the parameter~$k$.
\end{proof}
}

\newcommand{\modelN}{N}
\longversion{
\begin{proof}[Proof of Theorem~\ref{the:fpt-check}]
  First we check whether $\modelM$ is a model of $P^\modelM$. If $\modelM$
  is not a model of $P^\modelM$ then it is not an answer set of $P$, and we
  can neglect it. Hence assume that $\modelM$ is a model of $P^\modelM$. Now
  we run the algorithm \MinCheck. By Lemma~\ref{lem:soundness} the algorithm
  decides whether $\modelM$ is an answer set of $P$.

  In order to complete the proof, it remains to bound the running time. The
  check whether $\modelM$ is a model of $P^\modelM$ can clearly be carried
  out in linear time. For each set $\backdoorX_1\subseteq
  \modelM\cap\backdoorX$ the algorithm \MinCheck runs in linear time. This
  follows directly from the fact that we can compute the least
  model of a Horn program in linear time~\cite{DowlingGallier84}. As there
  are at most $2^k$ sets $\backdoorX_1$ to consider, the total running time
  is $O(2^k n)$ where $n$ denotes the input size of $P$ and $k=\Card{X}$.
  Thus, in particular, the decision is fixed-parameter tractable for
  parameter $k$.
\end{proof}
}

\longversion{
\begin{EX}
  Consider the program \(P\) from Example~\ref{ex:running} and the
  backdoor~$X=\{b,c,h\}$ from Example~\ref{ex:backdoor}.  Let
  \(\modelN = \{ a,b,c,g \} \subseteq \at(P)\). Obviously \(\modelN\)
  is a model of \(P\). We apply the algorithm \MinCheck for each
  \(\backdoorX_1\) of \(\{b,c\}\).
  For \(\backdoorX_1=\emptyset\) we obtain \(P^\modelN_{\backdoorX_1
    \subseteq \backdoorX} = \SB a \leftarrow b\rsep \leftarrow a\rsep
  \leftarrow e \rsep a \rsep g \leftarrow \pnot i \lrsep\SE\) and the
  least model \(L=\{a,g\}\) of \(\DH(P^\modelN_{\emptyset \subseteq
    \backdoorX})\). Since Condition~4a holds (\(L\) is not a model of
  \(\Constraints(P^\modelN_{\backdoorX_1 \subseteq \backdoorX})\)),
  the algorithm returns \AlgTrue.
  For \(\backdoorX_2=\{b\}\) we have \(P^\modelN_{\backdoorX_2
    \subseteq \backdoorX} = \SB a \rsep \leftarrow a \rsep g
  \lrsep\SE\) and \(L=\{g\}\) is the least model of \(\DH( P^\modelN_{
    \backdoorX_2 \subseteq \backdoorX})\). Since Condition~4a holds
  (\(L\) is not a model of \(\Constraints(P^\modelN_{\backdoorX_2
    \subseteq \backdoorX}))\), the algorithm returns \AlgTrue for
  \(X_2\).
  For \(\backdoorX_3=\{c\}\) we obtain \(P^\modelN_{\backdoorX_3
    \subseteq \backdoorX} = \SB a \rsep g \lrsep \SE\). The set
  \(L=\{a , g \}\) is the least model of \(\DH( P^\modelN_{
    \backdoorX_3 \subseteq \backdoorX})\).  Since none of the
  Conditions~4a--d hold, more precisely \(L\) is a model of
  \(\Constraints(P^\modelN_{ \backdoorX_1 \subseteq \backdoorX})\),
  \(L\) is a subset of \(\backdoorX\), \(L \cup \backdoorX_1\) is a
  proper subset of \(\modelN\), and \(L \cup \backdoorX_1\) is a model
  of \(P^\modelN\). Hence, the algorithm returns \AlgFalse. Thus
  \MinCheck does not succeed, and \(\modelM\) is not a minimal model
  of \(P^\modelM\).
\end{EX}
}

\begin{EX}
  \shortversion{Consider}\longversion{Again, consider} the program
  \(P\) from Example~\ref{ex:running} and the backdoor~$X=\{b,c,h\}$
  from Example~\ref{ex:backdoor}. Let \(\modelM = \{ b,c,g \}
  \subseteq \at(P)\). Since \(\modelM\) satisfies all rules in \(P\),
  the set \(\modelM\) is a model of \(P\). We apply the algorithm
  \MinCheck for each subset of \(\{b,c,h\}\).
  For \(\backdoorX_1=\emptyset\) we obtain \(P^{\modelM}_{\backdoorX_1
    \subseteq \backdoorX} = \SB a \leftarrow b \rsep \leftarrow a
  \rsep \leftarrow e \rsep a \rsep g \SE\). The set \(L=\{a, g\}\) is
  the least model of \(\DH(P^{\modelM}_{\backdoorX_1 \subseteq
    \backdoorX})\). Since Condition~4a holds, the algorithm returns
  \AlgTrue for \(X_1\).
  For \(\backdoorX_2=\{b\}\) we have \(P^\modelM_{\backdoorX_2
    \subseteq \backdoorX}=\SB a \rsep \leftarrow a \rsep g \rsep
  \leftarrow \lrsep \SE\) and the least model \(L=\{a, g\}\) of
  \(\DH(P^{\modelM}_{\backdoorX_2 \subseteq \backdoorX})\). Since
  Condition~4a holds, \MinCheck returns \AlgTrue for \(X_2\).
  For \(\backdoorX_3=\{c\}\) we gain \(P^{\modelM}_{\backdoorX_3
    \subseteq \backdoorX}=\SB a \rsep g \SE\) and the least model
  \(L=\{a,g\}\) of \(\DH( P^{\modelM}_{\backdoorX_3 \subseteq
    \backdoorX})\). Since Condition~4c holds, the algorithm returns
  \AlgTrue for \(X_3\).
  For \(\backdoorX_4=\{b,c\}\) we obtain \(P^\modelM_{\backdoorX_4
    \subseteq \backdoorX}= \{ g \lrsep \} \). The set \(L=\{g\}\) is
  the least model of \(\DH(P^{\modelM}_{\backdoorX_4 \subseteq
    \backdoorX})\). Since Condition~4c holds, the algorithm returns
  \AlgTrue for \(X_4\).
%
%
  For all remaining subsets of $X$ the Algorithm \MinCheck returns
  \AlgTrue according to Condition~1.
  Consequently, \(\modelM\) is a minimal model of \(P^{\modelM}\) and
  thus an answer set of $P$.
\end{EX}

Next, we state and prove that there are fpt-reductions from
\BdBrave{\Normal[0]} and \BdSkeptHyph{\Normal[0]} to \SAT which is the
main result of this paper.

\begin{THE}\label{the:fpt-SAT}
  Given a disjunctive logic program $P$ of input size $n$, a {strong
    {\normalfont\Normal}-backdoor} $\backdoorX$ of $P$ of size $k$,
  and an atom $a^* \in \at(P)$, we can produce in time $O(2^k n^2)$
  propositional formulas $\FBrave(a^*)$ and $\FSkept(a^*)$ such that
  (i)~$\FBrave(a^*)$ is satisfiable if and only if $a^*$ is in some
  answer set of $P$, and (ii)~$\FSkept(a^*)$ is unsatisfiable if and
  only if $a^*$ is in all answer sets of $P$.
\end{THE}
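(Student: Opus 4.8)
The plan is to build a single propositional formula $F$ whose satisfying assignments, restricted to a distinguished set of variables $\SB m_a \SM a\in \at(P)\SE$, correspond exactly to the answer sets of $P$, and then to read off $\FBrave(a^*)$ and $\FSkept(a^*)$ from $F$. Concretely I set $\FBrave(a^*) := F \wedge m_{a^*}$ and $\FSkept(a^*) := F \wedge \pnot m_{a^*}$. Properties (i) and (ii) are then immediate: $a^*$ lies in some answer set iff $F\wedge m_{a^*}$ is satisfiable, and $a^*$ lies in all answer sets iff no answer set avoids it, i.e.\ iff $F\wedge \pnot m_{a^*}$ is unsatisfiable. The whole burden is therefore to construct $F$ within the time bound $O(2^k n^2)$.

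For $F$ I rely on the characterisation of answer sets made available by Lemma~\ref{lem:soundness}. A set $\modelM$ is an answer set of $P$ iff it is a model of $P^\modelM$ and a minimal model of $P^\modelM$; and a model $\modelM$ of $P^\modelM$ is minimal iff $\MinCheck(\backdoorX_1)$ returns \AlgTrue for every $\backdoorX_1\subseteq \backdoorX$. Since being a classical model of $P$ coincides with being a model of $P^\modelM$, the first conjunct of $F$ is the collection of clauses $\bigvee_{x\in H(r)} m_x \por \bigvee_{z\in B^-(r)} m_z \por \bigvee_{y\in B^+(r)} \pnot m_y$, one per rule $r$ of $P$, of total size $O(n)$. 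The minimality conjunct is $\bigwedge_{\backdoorX_1\subseteq \backdoorX} G_{\backdoorX_1}$, where $G_{\backdoorX_1}$ encodes the predicate ``$\MinCheck(\backdoorX_1)=\AlgTrue$'' as a Boolean condition on the $m_a$. There are exactly $2^k$ subsets $\backdoorX_1$, which is where the factor $2^k$ enters; crucially, no separate enumeration of backdoor assignments is needed, since the values of the backdoor atoms are themselves among the $m_a$.

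The heart of the construction --- and the step I expect to be the main obstacle --- is encoding each $G_{\backdoorX_1}$ faithfully and in size $O(n^2)$, because $\MinCheck(\backdoorX_1)$ hinges on the \emph{least model} $L$ of the Horn program $\DH(P^\modelM_{\backdoorX_1\subseteq \backdoorX})$, and this program itself depends on the as-yet-unknown candidate $\modelM$. I cannot precompute $L$; I must define it inside the formula as a function of the $m_a$. To do so I introduce, for each $\backdoorX_1$, auxiliary variables $\ell^{(t)}_a$ for $0\le t\le n$ that unroll the fixpoint iteration of the Emden--Kowalski operator: $\ell^{(0)}_a$ is forced false, and $\ell^{(t+1)}_a$ is governed by a biconditional making it true iff $\ell^{(t)}_a$ holds or some rule of $P$ fires at stage $t$, where ``fires'' means the rule survives the GL-reduct (gated by $\bigwedge_{z\in B^-(r)} \pnot m_z$), survives the $\backdoorX_1$-construction (the static test $H(r)\cap \backdoorX_1=\emptyset$), and has every atom of its reduced positive body $B^+(r)\setminus \backdoorX_1$ already derived at stage $t$. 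Because $P-\backdoorX$ is normal by Lemma~\ref{lem:normal-strong-deletion}, every such rule has at most one head atom outside $\backdoorX$, so these are genuinely definite clauses and the biconditionals pin down $\ell^{(\cdot)}$ uniquely; the set $L$ is read off from $\ell^{(n)}$. With $n+1$ layers over $O(n)$ atoms and rules the subformula has size $O(n^2)$. The four escape conditions (a)--(d) of Step~4 are then simple set-relations among $L$ (via $\ell^{(n)}$), $\backdoorX_1$, and $\modelM$ (via the $m_a$): for instance condition~(c) is $\pnot(L\cup \backdoorX_1 \subsetneq \modelM)$ and condition~(d) asserts the existence of a GL-reduct rule violated by $L\cup \backdoorX_1$, each expressible in $O(n)$ clauses, while Step~1 contributes the disjunct $\bigvee_{a\in \backdoorX_1}\pnot m_a$ for ``$\backdoorX_1\not\subseteq \modelM$''.

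Putting the pieces together, $F$ is the conjunction of the $O(n)$ model clauses with the $2^k$ formulas $G_{\backdoorX_1}$, each of size $O(n^2)$, so $F$ has size $O(2^k n^2)$ and is produced within the same time bound; appending the literal $m_{a^*}$ or $\pnot m_{a^*}$ is free. Soundness and completeness follow directly from Lemma~\ref{lem:soundness}, once one verifies that the layered subformula does define $L$ as the least model of $\DH(P^\modelM_{\backdoorX_1\subseteq \backdoorX})$ and that $G_{\backdoorX_1}$ evaluates to true under an assignment exactly when $\MinCheck(\backdoorX_1)$ would return \AlgTrue on the corresponding $\modelM$. The delicate points to check are the faithfulness of the fixpoint unrolling (monotonicity, and that $n$ layers suffice to reach the fixpoint) and the correct $\modelM$-dependent gating of rule firing through the GL-reduct; the remaining bookkeeping is routine.
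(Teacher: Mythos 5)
Your proposal is correct and follows essentially the same route as the paper's proof: guess the candidate set $\modelM$ via one variable per atom, encode ``$\modelM$ is a model of $P^{\modelM}$'' by one clause per rule, and encode $\MinCheck(\backdoorX_1)$ for each of the $2^k$ subsets $\backdoorX_1 \subseteq \backdoorX$ by unrolling the Dowling--Gallier least-model fixpoint in $O(n)$ layers of biconditionally defined auxiliary variables (gated through the GL-reduct by the negative-body literals), with the Step-1 escape disjunct and conditions (a)--(d) expressed as $O(n)$-size set relations, exactly as in the paper's formulas $F^{\text{mod}}$, $F^{\subseteq}_i$, $F^{\text{lm}}_i$, and $F^{\text{(a)}}_i$--$F^{\text{(d)}}_i$. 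The only cosmetic difference is that you unroll $n+1$ layers where the paper uses $p=\min\{\Card{P},\Card{\at(P)}\}$, which does not affect the $O(2^k n^2)$ bound.
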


\begin{proof}
  We would like to use a similar approach as in the proof of
  Theorem~\ref{the:fpt-check}. However, we cannot consider all
  possible models $\modelM$ one by one, as there could be too many of
  them.  Instead, we will show that it is possible to implement
  $\MinCheck(X_1)$ for each set $X_1 \subseteq X$ nondeterministically
  in such a way that we do not need to know $\modelM$ in
  advance. Possible sets $\modelM$ will be represented by the truth
  values of certain variables, and since the truth values do not need
  to be known in advance, this will allow us to consider all possible
  sets $\modelM$ without enumerating them.

  Next, we describe the construction of the formulas~$\FBrave(a^*)$
  and $\FSkept(a^*)$ in detail.  

%
%
  Among the variables of our formulas will be a set $V:=\SB v[a] \SM a
  \in \at(P)\SE$ containing a variable for each atom of~$P$. The truth
  values of the variables in $V$ represent a subset $\modelM\subseteq
  \at(P)$, such that $v[a]$ is true if and only if $a\in \modelM$.
 
We define 
\medskip

\shortversion{\noindent $\FBrave(a^*) := F^\text{mod} \wedge F^\text{min} \wedge
v[a^*]$ and \medskip

\noindent $\FSkept(a^*) := F^\text{mod} \wedge F^\text{min} \wedge
\neg v[a^*]$, }
\longversion{
\begin{align*}\FBrave(a^*) &:= F^\text{mod} \wedge F^\text{min} \wedge
v[a^*] \quad\text{and}
\end{align*}
\begin{align*}
\FSkept(a^*) &:= F^\text{mod} \wedge F^\text{min} \wedge
\neg v[a^*],
\end{align*}
}
\medskip

\noindent where $F^\text{mod}$ and $F^\text{min}$ are formulas,
defined below, that check whether the truth values of the variables in
$V$ represent a model $\modelM$ of $P^\modelM$, and whether $\modelM$
is a minimal model of $P^\modelM$, respectively.

The definition of $F^\text{mod}$ is easy:

\shortversion{\noindent $F^{\text{mod}}:= \bigwedge_{r\in P} \Big(\bigwedge_{b\in
  B^-(r)} \neg v[b] \rightarrow$

\hfill $\big( \bigvee_{b\in B^+(r)} \neg v[b] \vee \bigvee_{b\in H(r)}
v[b]\big)\Big)$.}
\longversion{
  \begin{align*}
    F^{\text{mod}}:= \bigwedge_{r\in P} \Big(\bigwedge_{b\in
      B^-(r)} \neg v[b] \rightarrow \big( \bigvee_{b\in B^+(r)} \neg v[b]
    \vee \bigvee_{b\in H(r)} v[b]\big)\Big).
\end{align*}
}

The definition of $F^\text{min}$ is more involved. First we define:
\medskip

\shortversion{\noindent $F^{\text{min}}:= \bigwedge_{1\leq i \leq 2^k}
F^\text{min}_i$, }
\longversion{
  \begin{align*}
    F^{\text{min}}:= \bigwedge_{1\leq i \leq 2^k} F^\text{min}_i,
  \end{align*}
}
\medskip

\noindent where $F_i^\text{min}$, defined below, encodes the Algorithm
$\MinCheck(X_i)$ for each set $\backdoorX_i$ where $\backdoorX_1,
\dots, \backdoorX_{2^k}$ is an enumeration of all the subsets
of~$\backdoorX$.
  
The formula~$F_i^\text{min}$ will contain, in addition to the
variables in $V$, $p$ distinct variables for each atom of~$P$, $p :=
\min\{\Card{P}, \Card{\at(P)}\}$.  In particular, the set of variables
of $F_i^\text{min}$ is the disjoint union of $V$ and $U_i$ where
$U_i:=\SB u_i^j[a] \SM a \in \at(P)$, $1\leq j \leq p \SE$. We write
$U_i^j$ for the subset of $U_i$ containing all the variables
$u_i^j[a]$. We assume that for $i\neq i'$ the sets $U_i$ and $U_{i'}$
are disjoint.  For each $a\in \at(P)$ we also use the propositional
constants $\backdoorX(a)$ and $\backdoorX_1(a)$ that are true if and
only if $a\in \backdoorX$ and $a\in \backdoorX_1$, respectively.

\longversion{The truth values of the variables in $U_i^p$ represent
  the unique minimal model of $\DH(P^\modelM_{\backdoorX_s\subseteq
    \backdoorX})$.}

  We define the formula~$F_i^\text{min}$ by means of the following
  auxiliary formulas.

  The first auxiliary formula checks whether the truth values of the
  variables in~$V$ represent a set~$\modelM$ that contains $X_i$:

  \medskip 
\shortversion{
  \noindent $ F^{\subseteq}_i:= \bigwedge_{a\in \backdoorX}
  \backdoorX_i(a) \rightarrow v[a]$.
}
\longversion{
  \begin{align*}
    F^{\subseteq}_i:= \bigwedge_{a\in \backdoorX}
    \backdoorX_i(a) \rightarrow v[a].
  \end{align*}
}

\medskip
The next auxiliary formula encodes the computation of the least model
(``lm'')~$L$ of $\DH(P^\modelM_{\backdoorX_i\subseteq \backdoorX})$
where $\modelM$ and $L$ are represented by the truth values of the
variables in $V$ and $U_i^p$, respectively.

\medskip
\shortversion{\noindent $F^\text{lm}_{i} := \bigwedge_{a\in \at(P), 0\leq i \leq p}
F_{i}^{(a,i)}$, where}
\longversion{
  \begin{align*}
    F^\text{lm}_{i} := \bigwedge_{a\in \at(P), 0\leq i \leq p}
    F_{i}^{(a,i)}, \quad\text{where}
  \end{align*}
}

\shortversion{\noindent $F_{i}^{(a,0)} := u^{0}_i[a] \leftrightarrow \false$,}
\longversion{
  \begin{align*}
    F_{i}^{(a,0)} := u^{0}_i[a] \leftrightarrow \false,
  \end{align*}
}

\shortversion{\noindent $F_{i}^{(a,j)} := u_i^{j}[a] \leftrightarrow
\big[u_i^{j-1}[a] \vee\bigvee_{r\in P_{\backdoorX_i\subseteq
    \backdoorX}, a\in H(r)} $

\hfill$(\bigwedge_{b\in B^+(r)} u_{i}^{j-1}[b] \wedge \bigwedge_{b\in
  B^-(r)} \neg v[b])\big]$

\hfill (for $1\leq j \leq p-1$).}
\longversion{
  \begin{align*}
    F_{i}^{(a,j)} := u_i^{j}[a] \leftrightarrow \big[u_i^{j-1}[a] \vee
    \bigvee_{r\in P_{\backdoorX_i\subseteq \backdoorX}, a\in H(r)} 
    (\bigwedge_{b\in B^+(r)} u_{i}^{j-1}[b] \wedge 
    \bigwedge_{b\in B^-(r)} \neg v[b])\big]\\
    \hfill\text{(for } 1\leq j \leq p-1\text{)}.
  \end{align*}
}
\medskip

The idea behind the construction of $F^\text{lm}_{i}$ is to simulate
the linear-time algorithm of \citex{DowlingGallier84}. Initially, all
variables are set to false. This is represented by
variables~$u_i^0[a]$. Now we flip a variable from false to true if and
only if there is a Horn rule where all the variables in the rule body
are true. We iterate this process until a fixed-point is reached, then
we have the least model. The flipping is represented in our formula by
setting a variable~$u_i^{j}[a]$ to true if and only if either
$u_i^{j-1}[a]$ is true, or there is a rule~$r\in \DH( P^\modelM_{
  \backdoorX_i \subseteq \backdoorX})$ such that $H(r)=\{a\}$ and
$u_i^j[b]$ is true for all $b\in B^+(r)$. The truth values of the
variables~$u_i^p$ now represent the least model of
$\DH(P^\modelM_{\backdoorX_i\subseteq \backdoorX})$.

The next four auxiliary formulas check whether the respective
condition (a)--(d) of algorithm\linebreak $\MinCheck(X_i)$ does not hold
for~$L$.

$F^{\text{(a)}}_{i}$ expresses that there is a rule in
$\Constraints(P^\modelM_{\backdoorX_i\subseteq \backdoorX})$ that is
not satisfied by~$L$:

\medskip
\shortversion{\noindent $F^{\text{(a)}}_{i} := \bigvee_{r\in
  P_{\backdoorX_i\subseteq \backdoorX}, H(r)\subseteq \backdoorX }
(\bigwedge_{b\in B^-(r)} \neg v[b] \wedge$

\hfill$\bigwedge_{b\in B^+(r)}  u_i^p[b] )$.}
\longversion{
  \begin{align*}
    F^{\text{(a)}}_{i} := \bigvee_{r\in
      P_{\backdoorX_i\subseteq \backdoorX}, H(r)\subseteq \backdoorX }
    (\bigwedge_{b\in B^-(r)} \neg v[b] \wedge \bigwedge_{b\in B^+(r)}
    u_i^p[b] ).
  \end{align*}
}
\medskip

$F^{\text{(b)}}_{i}$ expresses that $L$ contains an atom that is not
in $\modelM\setminus \backdoorX$:

\medskip
\shortversion{\noindent $F^{\text{(b)}}_{i} := \bigvee_{a\in \at(P)\setminus
  \backdoorX} (\neg v[a] \wedge u_i^p[a])$.}
\longversion{
  \begin{align*}
    F^{\text{(b)}}_{i} := \bigvee_{a\in \at(P)\setminus \backdoorX} 
    (\neg v[a] \wedge u_i^p[a]).
  \end{align*}
}
\medskip

$F^{\text{(c)}}_{i}$ expresses that $L \cup \backdoorX_i$ equals
$\modelM$ or $L \cup \backdoorX_i$ contains an atom that is not
in~$\modelM$: \medskip

\shortversion{\noindent $F^{\text{(c)}}_{i} := \left(\bigwedge_{a\in \at(P)} v[a]
  \leftrightarrow (u_i^p[a] \vee \backdoorX_i(a)) \right) \vee$

\hfill $ \left(\bigvee_{a\in \at(P)} ( u_i^p[a] \vee \backdoorX_i(a))
  \wedge \neg v[a] \right)$.}
\longversion{
  \begin{align*}
    F^{\text{(c)}}_{i} := \left(\bigwedge_{a\in \at(P)} v[a]
      \leftrightarrow (u_i^p[a] \vee \backdoorX_i(a)) \right) \vee
    \left(\bigvee_{a\in \at(P)} ( u_i^p[a] \vee \backdoorX_i(a))
      \wedge \neg v[a] \right).
  \end{align*}
}
\medskip

$F^{\text{(d)}}_{i}$ expresses that $P^\modelM$ contains a
rule that is not satisfied by $L \cup \backdoorX_i$: \medskip

\shortversion{\noindent $F^{\text{(d)}}_{i} := \bigvee_{r\in P} [ \bigwedge_{a\in
  B^-(r)} \neg v[a] \wedge$

\hfill $\bigwedge_{a\in H(r)} ( \neg u_i^p[a] \wedge \neg
\backdoorX_i(a)) \wedge \bigwedge_{b\in B^+(r)} ( u_i^p[b] \vee
\backdoorX_i(b))].$}
\longversion{
  \begin{align*}
    F^{\text{(d)}}_{i} := \bigvee_{r\in P} [ \bigwedge_{a\in B^-(r)}
    \neg v[a] \wedge \bigwedge_{a\in H(r)} ( \neg u_i^p[a] \wedge \neg
    \backdoorX_i(a)) \wedge \bigwedge_{b\in B^+(r)} ( u_i^p[b] \vee
    \backdoorX_i(b))].
  \end{align*}
}
\medskip

\noindent Now we can put the auxiliary formulas together and obtain
\medskip
 
\shortversion{$F^{\text{min}}_{i}:=
\neg F^{\subseteq}_i 
\vee (F^{\text{lm}}_i
\wedge
(F^{\text{(a)}}_{i} \vee
F^{\text{(b)}}_{i} \vee
F^{\text{(c)}}_{i} \vee
F^{\text{(d)}}_{i}))$.}
\longversion{
 \begin{align*}
   F^{\text{min}}_{i}:= \neg F^{\subseteq}_i  \vee (F^{\text{lm}}_i \wedge
   (F^{\text{(a)}}_{i} \vee F^{\text{(b)}}_{i} \vee F^{\text{(c)}}_{i}
   \vee F^{\text{(d)}}_{i})).
 \end{align*}
}
\medskip

It follows by Lemma~\ref{lem:soundness} and by the construction of the
auxiliary formulas that (i)~$\FBrave(a^*)$ is satisfiable if and only
if $a^*$ is in some answer set of~$P$, and (ii)~$\FSkept(a^*)$ is
unsatisfiable if and only if $a^*$ is in all answer sets of $P$.

Hence, it remains to observe that for each $i\leq 2^k$ the auxiliary
formula $F^{\text{lm}}_{i}$ can be constructed in quadratic time,
whereas the auxiliary formulas $F^{\subseteq}_{i}$ and
$F^{\text{\tiny{(a)}}}_{i} \vee F^{\text{\tiny{(b)}}}_{i} \vee
F^{\text{\tiny{(c)}}}_{i} \vee F^{\text{\tiny{(d)}}}_{i}$ can be
constructed in linear time. Since $\Card{ \backdoorX } = k$ by
assumption, we need to construct $O(2^k)$ auxiliary formulas in order
to obtain $\FSkept(a^*)$ and $\FBrave(a^*)$. Hence, the running time
as claimed in Theorem~\ref{the:fpt-SAT} follows\longversion{ and the
  theorem is established}.
\end{proof}

We would like to note that Theorem~\ref{the:fpt-SAT} remains true if
we require that the formulas $\FSkept(a^*)$ and $\FBrave(a^*)$ are in
Conjunctive Normal Form (CNF), as we can transform in linear time any
propositional formula into a satisfiability-equivalent formula in CNF,
e.g., using the well-known transformation due to
Tseitin~\shortversion{\shortcite{Tseitin68}.}\longversion{\shortcite{Tseitin68transl},
  see also \cite{KleineBuningLettman99}. This transformation produces
  for a given propositional formula $F'$ in linear time a CNF formula
  $F$ such that both formulas are equivalent with respect to their
  satisfiability, and the length of $F$ is linear in the length
  of~$F'$.}

Furthermore, the SAT encoding can be improved. For instance, one could
share parts between the formulas $F_i^{\text{min}}$ or replace the
quadratic formula $F^\text{lm}_{i}$ for the computation of least
models with a smaller and more sophisticated \SAT
encoding~\cite{Janhunen04} or a \problemn{Sat(Dl)}
encoding~\cite{JanhunenNiemelaSevalnev09} for the \textsc{Smt}
framework which combines propositional logic and linear constraints.

We would like to point out that our approach directly extends to more
general problems, when we look for answer sets that satisfy a certain
global property which can be expressed by a propositional formula
$F^{\text{prop}}$ on the variables in $V$. We just check the
satisfiability of $F^{\text{mod}} \wedge F^{\text{min}} \wedge
F^{\text{prop}}$.

\begin{EX} 
  \sloppypar Consider the program $P$ from Example~\ref{ex:running}
  and the \strongBds{\Normal} $X=\{b,c,h\}$ of $P$ from
  Example~\ref{ex:backdoor}. We ask whether the atom $b$ is contained
  in at least one answer set. To decide the question, we check that
  $F_\text{brave}(b)$ is satisfiable and we answer the question
  positively. Since $\modelM=\{b,c,g\}$ is model of $P^\modelM$ we can
  satisfy $F^\text{mod}$ with a truth assignment $\tau$ that maps~$1$
  to each variable $v[x]$ where $x \in \{b, c, g\}$ and $0$ to each
  variable $v[x]$ where $x \in \at(P) \setminus \{b, c, g\}$. For
  $i=1$ let $X_1 = \emptyset$. Then we have for the constants
  $X_1(x)=0$ where $x \in \{b, c, h\}$. Observe that $\tau$ already
  satisfies $F^{\subseteq}_i$ and that $F^\text{lm}_i$ encodes the
  computation of the least model $L$ of $\DH( P^\modelM_{\backdoorX_1
    \subseteq \backdoorX})$ where $L$ is represented by the truth
  values of the variables in $U^P_i=\SB u^p_i[x] \SM x \in \at(P)
  \SE$. Thus $\tau$ also satisfies $F^\text{lm}_i$ if $\tau$ maps
  $u^p_i[a]$ to $1$, $u^p_i[g]$ to $1$, and $u^p_i[x]$ to $0$ where $x
  \in \at(P) \setminus \{a, g\}$. As $\tau$ satisfies
  $F^\text{\tiny{(a)}}_1$, the truth assignment~$\tau$ satisfies the
  formula~$F^\text{min}_1$. It is not hard to see that
  $F^\text{min}_i$ is satisfiable for other values of $i$. Hence the
  formula~$F_\text{brave}(b)$ is satisfiable and $b$ is contained in
  at least one answer set.
\end{EX}

\subsection*{Completeness for $\paraNP$ and $\coparaNP$}
The parameterized complexity class $\paraNP$ contains all
parameterized decision problems $L$ such that $(I,k)\in L$ can be
decided \emph{nondeterministically} in time $O(f(k)\CCard{I}^c)$, for
some computable function~$f$ and constant~$c$~\cite{FlumGrohe06}. By
$\coparaNP$ we denote the class of all parameterized decision problems
whose complement (the same problem with yes and no answers swapped) is
in $\paraNP$.

\shortversion{As a corollary to Theorem~\ref{the:fpt-SAT} we
    obtain the following result:}

  \longversion{If a non-parameterized problem is NP-complete, then
    adding a parameter that makes it $\paraNP$\hy complete does not
    provide any gain\shortversion{.}\longversion{, as this holds even
      true if the parameter is the constant~0.} Therefore a
    $\paraNP$\hy completeness result for a problem that without
    parameterization is in $\NP$, is usually considered as an utterly
    negative result. However, if the considered problem without
    parameter is outside $\NP$, and we can show that with a suitable
    parameter the problem becomes $\paraNP$\hy complete, this is in
    fact a positive result.
  Indeed, we get such a positive result as a corollary to
  Theorem~\ref{the:fpt-SAT}.}

\begin{COR}\label{cor:para}
  \BdBrave{\Normal[0]} is $\paraNP$-com\-plete, and
  \BdSkept{\Normal[0]} is $\coparaNP$-com\-plete.
\end{COR}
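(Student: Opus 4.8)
The plan is to prove each ``complete'' claim by combining a matching upper and lower bound: the upper bounds come straight from Theorem~\ref{the:fpt-SAT}, and the lower bounds come from a single hard slice of each problem at parameter value $k=0$.

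For the upper bounds, I would read Theorem~\ref{the:fpt-SAT} as supplying an fpt-transformation into \SAT\@. Given an instance $(P,\backdoorX,a^*)$ of \BdBrave{\Normal} with $\Card{\backdoorX}=k$, we first build $\FBrave(a^*)$ deterministically in time $O(2^k n^2)$, and by Theorem~\ref{the:fpt-SAT}(i) this formula is satisfiable exactly when $a^*$ lies in some answer set of $P$. A nondeterministic machine can then guess a truth assignment to the variables of $\FBrave(a^*)$ and verify it in time polynomial in the formula size, so the whole decision runs nondeterministically in time $O(2^k n^2)=O(f(k)\,n^c)$; hence \BdBrave{\Normal} is in $\paraNP$. (Equivalently, one invokes that $\paraNP$ is closed under fpt-reductions and that \SAT with parameter $0$ lies in $\paraNP$.) For \BdSkept{\Normal} I would use Theorem~\ref{the:fpt-SAT}(ii): $\FSkept(a^*)$ is \emph{unsatisfiable} precisely when $a^*$ is in all answer sets, so $\FSkept(a^*)$ is satisfiable precisely when the answer to \BdSkept{\Normal} is \emph{no}. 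Thus the \emph{complement} of \BdSkept{\Normal} reduces to \SAT in the same fpt fashion and lies in $\paraNP$, which places \BdSkept{\Normal} in $\coparaNP$.

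For the lower bounds I would invoke the standard criterion that a parameterized problem is $\paraNP$-hard as soon as one of its slices (the problem restricted to a fixed parameter value) is $\NP$-hard, and dually $\coparaNP$-hard as soon as a slice is $\coNP$-hard. I fix $k=0$. By Lemma~\ref{lem:normal-strong-deletion} the empty set of atoms is a strong $\Normal$-backdoor of $P$ exactly when $P$ is itself normal, so the $k=0$ slice of \BdBrave{\Normal} is precisely \textsc{Brave Reasoning} restricted to normal programs, and likewise the $k=0$ slice of \BdSkept{\Normal} is \textsc{Skeptical Reasoning} restricted to normal programs. These restricted problems are known to be $\NP$-complete and $\coNP$-complete, respectively~\cite{BidoitFroidevaux91,MarekTruszczynski91a,MarekTruszczynski91}; in particular the two slices are $\NP$-hard and $\coNP$-hard, yielding $\paraNP$-hardness of \BdBrave{\Normal} and $\coparaNP$-hardness of \BdSkept{\Normal}.

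I do not expect a deep obstacle, since Theorem~\ref{the:fpt-SAT} carries the computational weight. The two points that need care are (i) phrasing the $\coparaNP$ argument through the \emph{complement}, exploiting that $\FSkept$ swaps satisfiability with membership, and (ii) checking that the $k=0$ slice genuinely coincides with the normal-program reasoning problems, which rests on the empty-backdoor characterization from Lemma~\ref{lem:normal-strong-deletion}. Assembling the matching upper and lower bounds then gives $\paraNP$-completeness of \BdBrave{\Normal} and $\coparaNP$-completeness of \BdSkept{\Normal}, as claimed.
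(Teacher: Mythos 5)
Your proposal is correct and follows essentially the same route as the paper's proof: membership via the fpt-reductions of Theorem~\ref{the:fpt-SAT} to \SAT (for the brave problem) and, through the complement, to \textsc{UnSat} (for the skeptical problem), together with the fact that \SAT with constant parameter~$0$ is $\paraNP$-complete; and hardness via the Flum--Grohe criterion that an $\NP$-hard (resp.\ $\coNP$-hard) slice at a fixed parameter value yields $\paraNP$-hardness (resp.\ $\coparaNP$-hardness), instantiated at $k=0$. Your only additions are elaborations the paper leaves implicit, namely the explicit nondeterministic guess-and-check machine and the check, via Lemma~\ref{lem:normal-strong-deletion}, that the $k=0$ slice coincides with brave/skeptical reasoning over normal programs, which is where the cited $\NP$- and $\coNP$-completeness results apply.
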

\begin{proof}If a parameterized problem $L$ is $\NP$\hy hard when we
  fix the parameter to a constant, then $L$ is $\paraNP$-hard
  (\citey{FlumGrohe06}, Th.~2.14). As \BdBrave{\Normal[0]} is $\NP$\hy
  hard for backdoor size~$0$, we conclude that \BdBrave{\Normal[0]} is
  $\paraNP$-hard. A similar argument shows that \BdSkept{\Normal[0]}
  is $\coparaNP$-hard. \SAT, considered as a pa\-ram\-e\-ter\-ized
  problem with constant parameter~$0$, is clearly $\paraNP$\hy
  complete, this also follows from the mentioned result of Flum and
  Grohe~\shortcite{FlumGrohe06}; hence \textsc{UnSat} is
  $\coparaNP$\hy complete. As Theorem~\ref{the:fpt-SAT}
  \mbox{provides} fpt-re\-duc\-tions from \BdBrave{\Normal[0]} to
  \SAT, and from \BdSkept{\Normal[0]} to \textsc{UnSat}, we conclude
  that \BdBrave{\Normal[0]} is in $\paraNP$, and \BdSkept{\Normal[0]}
  is in $\coparaNP$.
\end{proof}

\section{Finding Backdoors}  
In this section, we study the problem of finding backdoors, formalized
in terms of the following parameterized problem:
\shortversion{\BdDetect{Strong}{\ensuremath{\CCC}}: \emph{Given} a
  (disjunctive) program~$P$, and the parameter integer
  $k$. \emph{Question:} Find a strong $\CCC$\hy backdoor~$X$ of $P$ of
  size at most~$k$, or report that such $X$ does not
  exist.}\longversion{\pproblem{\BdDetect{Strong}{\ensuremath{\CCC}}}{A
    (disjunctive) program~$P$, and an integer $k$.}{The
    integer~$k$.}{Find a strong $\CCC$\hy backdoor~$X$ of $P$ of size
    at most~$k$, or report that such $X$ does not exist.}}
We also consider the problem \BdDetect{Deletion}{\ensuremath{\CCC}},
defined similarly.

Let $P$ be a program. Let the \emph{head dependency graph} $U^H_P$ be
the undirected graph~$U^H_P=(V,E)$ defined on the set~$V=\at(P)$ of
atoms of the given program~$P$, where two atoms~$x,y$ are joined by an
edge~$xy \in E$ if and only if $P$ contains a non-tautological
rule~$r$ with $x,y\in H(r)$.  A \emph{vertex cover} of a graph
$G=(V,E)$ is a set $X\subseteq V$ such that for every edge~$uv\in E$
we have $\{u,v\}\cap \backdoorX \neq \emptyset$.

\begin{LEM}\label{lem:normal-vc}
Let $P$ be a program. A set $\backdoorX \subseteq \at(P)$ is a
\delBds{\ensuremath{\Normal}} of $P$ if and only if $\backdoorX$ is a vertex
cover of $U^H_P$.
\end{LEM}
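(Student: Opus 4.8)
The plan is to prove the biconditional by unpacking both definitions and showing they coincide edge by edge. Recall that $\backdoorX$ is a \delBds{\ensuremath{\Normal}} of $P$ exactly when $P-\backdoorX$ is normal, i.e., every rule of $P-\backdoorX$ has at most one atom in its head. The key observation is that deleting the atoms of $\backdoorX$ from the heads shrinks each head $H(r)$ to $H(r)\setminus \backdoorX$, so $P-\backdoorX$ is normal if and only if $\Card{H(r)\setminus \backdoorX}\leq 1$ for every rule $r\in P$.

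First I would prove the forward direction by contraposition. Suppose $\backdoorX$ is \emph{not} a vertex cover of $U^H_P$. Then there is an edge $xy\in E$ with $\{x,y\}\cap \backdoorX=\emptyset$. By the definition of $U^H_P$, this edge arises from a non-tautological rule $r$ with $x,y\in H(r)$. Since neither $x$ nor $y$ lies in $\backdoorX$, both survive deletion, so $\{x,y\}\subseteq H(r)\setminus\backdoorX$, giving $\Card{H(r)\setminus\backdoorX}\geq 2$. Hence the corresponding rule in $P-\backdoorX$ is not normal, so $\backdoorX$ is not a \delBds{\ensuremath{\Normal}}.

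For the converse I would again argue by contraposition. Suppose $\backdoorX$ is not a \delBds{\ensuremath{\Normal}}, so some rule $r\in P$ yields a non-normal rule after deletion, i.e., $\Card{H(r)\setminus\backdoorX}\geq 2$. Pick two distinct atoms $x,y\in H(r)\setminus\backdoorX$. Then $x,y\in H(r)$ and $\{x,y\}\cap\backdoorX=\emptyset$. The one subtlety here is that the edge $xy$ belongs to $E$ only when $r$ is non-tautological; I would handle this by invoking the standing assumption (stated in the Preliminaries) that $P$ contains no tautological rules, so every rule of $P$ is non-tautological and therefore contributes its head edges to $U^H_P$. Thus $xy\in E$ is an edge left uncovered by $\backdoorX$, witnessing that $\backdoorX$ is not a vertex cover of $U^H_P$.

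The argument is essentially a direct translation between the two combinatorial conditions, so there is no deep obstacle; the only point requiring care is the tautology caveat in the definition of $U^H_P$, which is exactly where the blanket no-tautologies assumption on $P$ is needed to ensure that a non-normal rule in $P-\backdoorX$ genuinely produces an edge in the head dependency graph.
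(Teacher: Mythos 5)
Your proposal is correct and takes essentially the same route as the paper's proof: both simply unpack the two definitions and match non-normal heads in $P-\backdoorX$ with uncovered edges of $U^H_P$ (the paper argues one direction directly and the other by contradiction, while you phrase both by contraposition, which is a purely cosmetic difference). If anything, you are slightly more careful than the paper in explicitly invoking the standing no-tautological-rules assumption to guarantee that a rule with $\Card{H(r)\setminus \backdoorX}\geq 2$ actually contributes an edge to $U^H_P$, a point the paper's proof leaves implicit.
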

\shortversion{Due to space limitations we omit the proof.}
\longversion{
\begin{proof}
Let $\backdoorX$ be a \delBds{\Normal} of $P$. Consider an edge $uv$ of
$U^H_P$, then there is a rule $r\in P$ with $u,v\in H(r)$ and $u\neq v$.
Since $\backdoorX$ is a deletion $\Normal$-backdoor set of $P$, we have
$\{u,v\}\cap \backdoorX \neq \emptyset$. We conclude that $\backdoorX$ is a
vertex cover of $U^H_P$.

Conversely, assume that $\backdoorX$ is a vertex cover of $U^H_P$. Consider
a rule $r\in P-\backdoorX$ for proof by contradiction. If $\Card{H(r)}\geq
2$ then there are two variables $u,v\in H(r)$ and an edge $uv$ of $U^H_P$
such that $\{u,v\}\cap \backdoorX =\emptyset$, contradicting the assumption
that $\backdoorX$ is a vertex cover. Hence the lemma prevails.
\end{proof}}

\begin{THE}\label{the:horn}	
The problems \BdDetect{Strong}{\Normal} and \BdDetect{Deletion}{\Normal} are
fixed-parameter tractable.
\longversion{In particular, given a program $P$ of input size $n$, and an integer
$k$, we can find in time $O(1.2738^k + kn)$ a \strongBds{\Normal} 
of $P$ with a size $\leq k$ or decide that no such backdoor exists.}
\end{THE}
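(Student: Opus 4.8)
The plan is to reduce both problems to \textsc{Vertex Cover} and then apply a known fixed-parameter algorithm. By Lemma~\ref{lem:normal-strong-deletion}, a set is a \strongBds{\Normal} if and only if it is a \delBds{\Normal}, so \BdDetect{Strong}{\Normal} and \BdDetect{Deletion}{\Normal} are in fact the same problem, and it suffices to treat the deletion version. By Lemma~\ref{lem:normal-vc}, the \delBds{\Normal}s of $P$ are exactly the vertex covers of the head dependency graph $U^H_P$. Hence finding a \delBds{\Normal} of size at most $k$ is equivalent to finding a vertex cover of $U^H_P$ of size at most $k$, and \textsc{Vertex Cover} parameterized by the solution size $k$ is the prototypical fixed-parameter tractable problem, solvable in time $O(1.2738^k + k\,\Card{V})$ by the branching algorithm of Chen, Kanj, and Xia. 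Since $\Card{V}=\Card{\at(P)}\le n$, invoking this algorithm yields the claimed bound, provided we can construct $U^H_P$ within the same budget.

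The one point requiring care is therefore the construction of $U^H_P$: a rule $r$ whose head has size $h=\Card{H(r)}$ contributes a clique on $h$ vertices, i.e.\ $\binom{h}{2}$ edges, so naively materialising all edges could cost far more than $O(kn)$. I would resolve this with the following observation. A clique on $h$ vertices forces at least $h-1$ of its vertices into any vertex cover; hence if some non-tautological rule satisfies $\Card{H(r)}>k+1$, we may immediately report that no backdoor of size $\le k$ exists. After this test every head has size at most $k+1$, so the number of candidate edges contributed by $r$ is $\binom{\Card{H(r)}}{2}\le \tfrac{k}{2}\,\Card{H(r)}$, and summing over all rules gives $\tfrac{k}{2}\sum_{r\in P}\Card{H(r)}=O(kn)$ total edge insertions, since $\sum_{r\in P}\Card{H(r)}\le n$. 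Each insertion is performed in constant time using a hash table that discards duplicates, so $U^H_P$ is built in time $O(kn)$.

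Putting the pieces together, the algorithm first scans $P$ to build $U^H_P$ (aborting as soon as it encounters a head of size exceeding $k+1$), and then runs the Chen--Kanj--Xia vertex-cover algorithm on $U^H_P$; by Lemma~\ref{lem:normal-vc} the vertex cover returned (or the verdict that none of size $\le k$ exists) is exactly the desired backdoor, and the overall running time is $O(1.2738^k + kn)$. I expect the main obstacle to be precisely the edge-blowup issue just described: once the large-head test bounds the head sizes, the remainder is a direct application of the two preceding lemmas together with an off-the-shelf fixed-parameter vertex-cover solver.
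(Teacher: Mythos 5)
Your proof is correct and follows essentially the same route as the paper's: reduce \BdDetect{Strong}{\Normal} to \BdDetect{Deletion}{\Normal} via Lemma~\ref{lem:normal-strong-deletion}, identify the \delBds{\Normal}s with the vertex covers of $U^H_P$ via Lemma~\ref{lem:normal-vc}, and invoke the Chen--Kanj--Xia vertex-cover algorithm. Your additional large-head test (rejecting immediately when some non-tautological rule has $\Card{H(r)}>k+1$) is a refinement the paper's proof silently omits; it is what justifies building $U^H_P$ within the stated $O(kn)$ budget, so it only strengthens the argument.
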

\begin{proof}
  In order to find a \delBds{\Normal} of a given program~$P$, we use
  Lemma~\ref{lem:normal-vc} and find a vertex cover of size at
  most~$k$ in the head dependency graph~$U^D_P$. A vertex cover of
  size~$k$, if it exists, can be found in time~\mbox{$O(1.2738^k +
    kn)$}~\cite{ChenKanjXia06}. Thus the theorem holds for
  \delBds{\Normal}s. Lemma~\ref{lem:normal-strong-deletion} states
  that the \strongBds{\Normal}s of $P$ are exactly the
  \delBds{\Normal}s of $P$ (as we assume that $P$ does not contain any
  tautological rules). The theorem follows.
\end{proof}

\sloppypar In Theorem~\ref{the:fpt-SAT} we assume that a
\strongBds{\Normal} of size at most~$k$ is given when solving the
prob\-lems \problemn{Strong}
\Normal-\problemn{Back\-door-Brave-Reasoning} and
\problemn{Skeptical-Reasoning}. As a direct consequence of
Theorem~\ref{the:horn}, this assumption can be dropped, and we obtain
the following corollary.

\begin{COR}
  The results of Theorem~\ref{the:fpt-SAT} and
  Corollary~\ref{cor:para} still hold if the backdoor is not given as
  part of the input.
\end{COR}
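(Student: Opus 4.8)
The plan is to compose the backdoor-detection algorithm of Theorem~\ref{the:horn} with the reduction of Theorem~\ref{the:fpt-SAT}. I would take as the parameter the size~$k$ of a smallest \strongBds{\Normal} of the input program~$P$, a quantity that is well defined without reference to any backdoor supplied in the input. First I would run the fpt-algorithm of Theorem~\ref{the:horn} on~$P$ with bound~$k$; in time $O(1.2738^k + kn)$ it returns a \strongBds{\Normal} $X$ with $\Card{X}\le k$, which is guaranteed to exist by the choice of~$k$. I would then feed $P$, $a^*$, and this~$X$ into the construction of Theorem~\ref{the:fpt-SAT}, obtaining the formulas $\FBrave(a^*)$ and $\FSkept(a^*)$ in time $O(2^{\Card{X}}n^2)$, which is $O(2^k n^2)$ because $\Card{X}\le k$. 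Summing the two bounds, the whole procedure runs in time $O(2^k n^2)$ and hence is an fpt-reduction to \SAT (resp.\ \textsc{UnSat}), exactly as asserted by Theorem~\ref{the:fpt-SAT}.

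Conceptually this is just the closedness of fpt-reductions under composition: detecting the backdoor is an fpt preprocessing step whose output~$X$ has size bounded by~$k$, and Theorem~\ref{the:fpt-SAT} is an fpt-reduction once~$X$ is available. The correctness of the produced formulas is inherited verbatim, since Theorem~\ref{the:fpt-SAT} is indifferent to how the backdoor was obtained. For the classification of Corollary~\ref{cor:para} I would then argue the two directions separately: membership in $\paraNP$ (resp.\ $\coparaNP$) follows from this composed fpt-reduction together with the nondeterministic polynomial-time solvability of \SAT, exactly as before; and $\paraNP$- (resp.\ $\coparaNP$-) hardness is unaffected, because the $\NP$- (resp.\ $\coNP$-) hardness used there already holds at backdoor size~$0$, where the unique smallest backdoor is~$\emptyset$ and is found by Theorem~\ref{the:horn} in linear time.

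The hard part will be bookkeeping about the parameter rather than any new combinatorial content, namely what to do when no \strongBds{\Normal} of size at most~$k$ exists. By fixing $k$ to be the size of a smallest backdoor, a function of~$P$ alone, this case never arises and Theorem~\ref{the:horn} is guaranteed to return a witness of size $\le k$. If instead one insists on the decision formulation with $k$ supplied explicitly, I would have the reduction emit a fixed unsatisfiable formula whenever detection reports that no such backdoor exists, which keeps the map an fpt-reduction on the intended instances.
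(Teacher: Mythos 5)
Your proposal is correct and matches the paper's own (implicit) argument: the paper derives this corollary precisely as a ``direct consequence'' of Theorem~\ref{the:horn}, i.e., by first running the fpt backdoor-detection algorithm and then composing with the reduction of Theorem~\ref{the:fpt-SAT}, with the $O(2^k n^2)$ term dominating $O(1.2738^k + kn)$. Your extra bookkeeping about parameterizing by the size of a smallest backdoor (and emitting a trivial formula when detection fails in the explicit-$k$ formulation) is a careful spelling-out of what the paper leaves tacit, not a different route.
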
 
                         
\section{Backdoors to Tightness} 

We associate with each program \(P\) its \emph{positive dependency
  graph} \(D^+_P\). It has the atoms of~$P$ as vertices and a directed
edge \((x,y)\) between any two atoms \(x,y \in \at(P)\) for which
there is a rule \(r\in P\) with \(x\in H(r)\) and \(y\in B^+(r)\). A
program is called \emph{tight} if \(D^+_P\) is
acyclic~\cite{LeeLifschitz03}.  We denote the class of all tight
programs by \tight.

It is well known that the main $\ASP$ reasoning problems are in $\NP$
and $\coNP$ for tight programs; in fact, a reduction to \SAT based on
the concept of \emph{loop formulas} has been proposed by Lin and
Zhao~\shortcite{LinZhao04a}.  This was then generalized by Lee and
Lifschitz~\shortcite{LeeLifschitz03} with a reduction that takes as
input a disjunctive normal program~$P$ together with the set $S$ of
all directed cycles in the positive dependency graph of~$P$, and
produces a CNF formula~$F$ such that answer sets of $P$ correspond to
the satisfying assignments of $F$. This provides an fpt-reduction from
the problems \problemn{Brave Reasoning} and \problemn{Skeptical
  Reasoning} to \SAT, when parameterized by the number of all cycles
in the positive dependency graph of a given program $P$, assuming that
these cycles are given as part of the input.

The number of cycles does not seem to be a very practical parameter,
as this number can quickly become very large even for very simple
programs.  Lifschitz and Razborov~\shortcite{LifschitzRazborov06} have
shown that already for normal programs an exponential blowup may
occur, since the number of cycles in a normal program can be
arbitrarily large. Hence, it would be interesting to generalize the
result of Lee and Lifschitz~\shortcite{LeeLifschitz03} to a more
powerful parameter. In fact, the size $k$ of a \delBds{\tight} would
be a candidate for such a parameter, as it is easy to see, it is at
most as large as the number of cycles, but can be exponentially
smaller. This is a direct consequence of the following two
observations: (i)~If a program~$P$ has exactly~$k$ cycles in $D^+_P$,
we can construct a \delBds{\tight}~$\backdoorX$ of $P$ by taking one
element from each cycle into $\backdoorX$. (ii)~If a program~$P$ has a
\delBds{\tight} of size~$1$, it can have arbitrarily many cycles that
run through the atom in the backdoor.

\longversion{In the following,}\shortversion{Next,} we show that this
parameter~$k$ is of little use, as the reasoning problems already
reach their full complexity for programs with a \delBds{\tight} of
size~$1$.

\begin{THE}\label{the:tighthardness} 
  The problems \BdBrave{\tight} and \BdSkept{\tight} are
  $\Sigma^P_2$-hard and $\Pi^P_2$-hard, respectively, even for programs
  that admit a \strongBds{\tight} of size~$1$, and the backdoor is
  provided with the input. The problems remain hard when we consider a
  \delBds{\tight} instead of a \strongBds{\tight}.
\end{THE}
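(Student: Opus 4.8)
The plan is to give polynomial-time reductions from the canonical $\Sigma^P_2$- and $\Pi^P_2$-complete quantified-Boolean-formula problems, using the classical saturation technique, but arranged so that a single atom breaks every cycle of the positive dependency graph. For the $\Sigma^P_2$-hardness of \BdBrave{\tight} I would start from a formula $\Phi = \exists X \forall Y\, \phi$ with $\phi$ a DNF over $X\cup Y$ (deciding such $\Phi$ is $\Sigma^P_2$-complete), and build a negation-free disjunctive program $P$ over atoms $x_i,\bar x_i$, $y_j,\bar y_j$, and one fresh \emph{saturation atom} $w$, with: choice rules $x_i \por \bar x_i \leftarrow$ and $y_j \por \bar y_j \leftarrow$; for every term $\ell_1\wedge\dots\wedge\ell_r$ of $\phi$ a rule $w \leftarrow \hat\ell_1,\dots,\hat\ell_r$, where a positive literal $v$ is encoded by $v$ and $\pnot v$ by $\bar v$; and saturation rules $y_j \leftarrow w$, $\bar y_j \leftarrow w$ for every $j$. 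Since $P$ is negation-free, $P^\modelM = P$ for all $\modelM$, so the answer sets of $P$ are exactly its minimal models, and I set $a^* := w$.

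The correctness claim is that $w$ lies in some answer set of $P$ iff $\Phi$ is true. For the forward direction, given a winning assignment $\alpha$ to $X$, the saturated set containing the $\alpha$-literals, $w$, and all $y,\bar y$ is a model; any proper submodel must retain the same $X$-literals, and dropping $w$ it would encode a concrete $Y$-assignment for which no term fires, contradicting $\forall Y\,\phi(\alpha,\cdot)$, so the saturated set is minimal. Conversely, from a minimal model containing $w$ I extract, by minimality, exactly one $X$-literal per variable; if some $\beta$ falsified $\phi(\alpha,\cdot)$, the set of $\alpha$- and $\beta$-literals without $w$ would be a strictly smaller model, a contradiction, so $\forall Y\,\phi(\alpha,\cdot)$ holds and $\Phi$ is true.

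The key structural observation is that \emph{every} edge of $D^+_P$ is incident to $w$: the term rules contribute only edges $(w,\cdot)$, the saturation rules only edges $(y_j,w),(\bar y_j,w)$, and the choice rules have empty positive bodies. Hence $P-\{w\}$ has no edges and is acyclic, so $\{w\}$ is a \delBds{\tight} of size $1$. Checking the two truth assignments to $w$ confirms it is also a \strongBds{\tight}: for $\tau(w)=1$ the term rules are deleted by condition~(i), for $\tau(w)=0$ they are deleted by condition~(ii) since their head $\{w\}$ is contained in the backdoor, and in both cases the saturation rules vanish, leaving edge-free reducts $P_\tau$. Thus \BdBrave{\tight} with this fixed size-$1$ backdoor encodes $\Phi$ and is $\Sigma^P_2$-hard.

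For the $\Pi^P_2$-hardness of \BdSkept{\tight} I would dualize: from a $\Pi^P_2$-complete instance $\forall X \exists Y\, \psi$ with $\psi$ in CNF, apply the construction above to the DNF $\phi := \pnot\psi$, then add one fresh atom $a^*$ with the single rule $a^* \leftarrow \NOT w$. As $a^*$ occurs nowhere else and has empty positive body, it does not touch $D^+_P$, so $\{w\}$ remains a deletion and strong \tight-backdoor of size $1$; a short reduct analysis shows the answer sets are exactly those of $P$ with $a^*$ added precisely when $w$ is absent, whence $a^*$ is in every answer set iff $P$ has no answer set containing $w$, i.e.\ iff $\exists X\forall Y\,\pnot\psi$ fails, i.e.\ iff $\forall X\exists Y\,\psi$ holds (and such a program always has an answer set, so the statement is never vacuous). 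The same reasoning applies to \delBds{\tight}s. I expect the main obstacle to be the minimal-model correctness of the saturation encoding — in particular forcing the $\forall Y$ quantifier by manufacturing a strictly smaller model from any falsifying $Y$-assignment — rather than the backdoor bookkeeping, which is immediate once one sees that all positive-dependency edges are incident to the saturation atom.
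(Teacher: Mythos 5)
Your proof is correct, and it follows the same underlying idea as the paper --- the Eiter--Gottlob saturation technique with a single saturation atom $w$ serving as the size-$1$ backdoor --- but it executes it differently. The paper simply invokes the Eiter--Gottlob reduction verbatim: its program additionally contains the rules $w \leftarrow y_j, z_j$ and the negative rule $w \leftarrow \pnot w$ (which forces $w$ into every answer set, so that \problemn{Consistency} and brave reasoning on $w$ coincide), and it obtains $\Pi^P_2$-hardness of skeptical reasoning by citing Eiter and Gottlob's reduction from \problemn{Consistency}, checking only the backdoor property of $\{w\}$ itself. You instead strip the program down to choice, term, and saturation rules, prove the minimal-model correspondence from scratch (legitimate, since your program is negation-free, so answer sets are exactly minimal models), query $w$ bravely without the $w \leftarrow \pnot w$ gadget, and handle the skeptical case with the fresh rule $a^* \leftarrow \pnot w$ plus the non-vacuousness observation that a constraint-free negation-free program always has an answer set. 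Your route buys a self-contained argument and a cleaner program; the paper's buys brevity by reusing a known reduction whose correctness need not be re-verified. Two cosmetic slips, neither a gap: under $\tau(w)=1$ the saturation rules do not ``vanish'' --- condition~(iv) does not apply to them since their negative bodies are empty --- but become facts $y_j \leftarrow$, which still contribute no edges to the positive dependency graph; and a proper submodel of the saturated model lacking $w$ need not ``encode a concrete $Y$-assignment,'' since it may contain both $y_j$ and $\bar y_j$ for some $j$ --- but then any sub-assignment choosing one available literal per pair makes some term fire (term bodies are monotone in the atom set), so the contradiction still goes through.
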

\shortversion{
  \begin{proof}
    \citex{EiterGottlob95} give a reduction from a
    $\Sigma_2^P$-complete ($\Pi^2_P$-complete) problem to Brave
    (Skeptical) Reasoning, where the produced program has rules of the
    form $x_i\por v_i \rsep y_i \por z_j \rsep y_j \leftarrow w \rsep
    z_j \leftarrow w\rsep w \leftarrow y_j,z_j\rsep w \leftarrow
    g(l_{k,1}), g(l_{k,2}),g(l_{k,3})\rsep w \leftarrow \pnot w
    \lrsep$. The set~$\{w\}$ is a \delBds{\tight} (\strongBds{\tight})
    of size 1.
\end{proof}
}
\longversion{
\begin{proof}
  Consider the reduction from Eiter and Gottlob~\cite{EiterGottlob95}
  which reduces the $\Sigma^P_2$-hard problem
  \problemn{$\exists\forall$-QBF Model Checking} to the problem
  \problemn{Consistency} (which decides whether given a program~$P$
  has an answer set). A $\exists\forall$ quantified boolean formula
  (QBF) has the form $\exists x_1 \dotsm \exists x_n\forall y_1 \dotsm
  \forall y_m D_1 \vee \dotsc \vee D_r$ where each $D_i=l_{i,1}\wedge
  l_{i,2}\wedge l_{i,3}$ and $l_{i,j}$ is either an atom
  $x_1,\dotsc,x_n,y_1,\dotsc,y_m$ or its negation. Their construction
  yields a program $P:=\{x_i\por v_i \rsep y_i \por z_j \rsep y_j
  \leftarrow w \rsep z_j \leftarrow w\rsep w \leftarrow y_j,z_j\rsep w
  \leftarrow g(l_{k,1}), g(l_{k,2}),g(l_{k,3})\rsep w \leftarrow \pnot
  w \lrsep\}$ for each $i \in \{1,\dotsc,n\}$, $j \in \{1,\dotsc,m\}$,
  $k \in \{1, \dotsc,r\}$, and $g$ maps as follows $g(\neg x_i)=v_i$,
  $g(\neg y_j)=z_j$, and otherwise $g(l)=l$. Since $P_{w=0}=\{x_i\por
  v_i\leftarrow \rsep y_j \vee z_j \lrsep \}$ and $P_{w=1}=\{x_i \por
  v_i \rsep y_j \vee z_j \rsep y_j \rsep z_j \rsep\}$ are both in
  $\tight$, the set $X=\{w\}$ is a \strongBds{\tight} of $P$ of size
  $1$. Thus the restriction does not yield tractability. The
  intractability of \problemn{Skeptical Reasoning} follows directly by
  the reduction of Eiter and Gottlob~\cite{EiterGottlob95} from the
  problem \problemn{Consistency}. Hardness of the other problems can
  be observed easily. Since $P - \{ w \}:=\{x_i\por v_i \rsep y_i \por
  z_j \rsep y_j \rsep z_j \rsep \leftarrow y_j,z_j\rsep \leftarrow
  g(l_{k,1}), g(l_{k,2}),g(l_{k,3})\rsep\}$ for each $i \in
  \{1,\dotsc,n\}$, $j \in \{1,\dotsc,m\}$, $k \in \{1, \dotsc,r\}$ is
  tight, we obtain a \delBds{\tight} of size $1$. In consequence we
  established the theorem.
\end{proof}
}

\section{Experiments}
Although our main results are theoretical, we have performed first
experiments to determine the size of smallest \strongBds{\Normal}s for
answer set programs representing \emph{structured} and \emph{random}
sets of instances. Our experimental results summarized in
Table~\ref{tab:normal} indicate, as expected, that structured
instances have smaller backdoors than random instances. As instances
from \texttt{ConformantPlanning} have rather small backdoors our
translation seems to be feasible for these instances. Furthermore, we
have compared the size of a smallest \strongBds{\Normal} with the size
of a smallest \strongBds{\textbf{Horn}}~\cite{FichteSzeider12a} for
selected sets. It turns out that for \texttt{ConformantPlanning}
smallest \strongBds{\Normal}s are significantly smaller (0.7\%
vs. 8.8\% of the total number of atoms).

\begin{table}
\centering
\longversion{\begin{tabular*}{0.75\textwidth}{@{\extracolsep{\fill} } lrrr}}
  \shortversion{\begin{tabular}{@{}lrrr@{}}}
    \toprule
    instance set & atoms & \shortversion{bd}\longversion{backdoor} (\%)
    & stdev \\
    \midrule
    \texttt{ConformantPlanning} & 1378.21 & 0.69 & 0.39\\
  \texttt{MinimalDiagnosis} & 97302.5 & 14.19 & 3.19\\
  \texttt{MUS} & 49402.3 & 1.90 & 0.35\\
  \texttt{StrategicCompanies} & 2002.0 & 6.03 & 0.04 \\
  \texttt{Mutex} & 6449.0 & 49.94 & 0.09 \\
  \texttt{RandomQBF} & 160.1 &49.69 & 0.00 \\
\bottomrule
\longversion{\end{tabular*}}
\shortversion{\end{tabular}}
\caption{
  Size of smallest \strongBds{\Normal}\shortversion{ (bd) } for benchmark
  sets, given as \% of the total number of atoms by the 
  mean over the instances. \shortversion{\\}
  \small{  
    \texttt{ConformantPlanning}: secure planning under incomplete
    initial states~\longversion{\protect}\cite{ToPontelliSon09} encodings provided by
    Gebser and Kaminski~\longversion{\protect}\shortcite{GebserKaminski12}.
%
%
    \texttt{MinimalDiagnosis}: an application in systems
    biology~\longversion{\protect}\cite{GebserSchaubThieleUsadelVeber08}
    instances provided by~\longversion{\protect}\citex{CalimeriEtAl11}.
    \texttt{MUS}: problem whether a clause belongs
    to some minimal unsatisfiable 
    subset~\longversion{\protect}\cite{JanotaMarques-Silva11} 
    encoding provided by Gebser and 
    Kaminski~\longversion{\protect}\shortcite{GebserKaminski12}.   
    \texttt{StrategicCompanies}: encoding the $\Sigma^P_2$-complete 
    problem of producing and owning companies and strategic sets 
    between the companies~\longversion{\protect}\cite{GebserLiuNamasivayamNeumannSchaubTruszczynski07}.
    \texttt{Mutex}: equivalence test of partial implementations of
    circuits, provided
    by~\longversion{\protect}\citex{MarateaRiccaFaberLeone08} 
    based on QBF instances of~\longversion{\protect}\citex{AyariBasin00}.
    \texttt{RandomQBF}: translations of randomly generated $2$-QBF
    instances using the method by Chen and
    Interian~\longversion{\protect}\shortcite{ChenInterian05}
    instances provided by
    Gebser~\longversion{\protect}\shortcite{GebserLiuNamasivayamNeumannSchaubTruszczynski07}.}
\shortversion{\belowcaptionskip=-1em}
}\label{tab:normal}
\end{table}

\section{Conclusion}

We have shown that backdoors of small size capture structural
properties of disjunctive \ASP instances that yield to a reduction of problem
complexity. In particular, small backdoors to normality admit an
fpt-translation from \ASP to \SAT and thus reduce the complexity of the
fundamental \ASP problems from the second level of the Polynomial
Hierarchy to the first level. Thus, the size of a smallest
$\Normal$\hy backdoor is a structural parameter that admits a
fixed-parameter tractable complexity reduction without making the
problem itself fixed-parameter tractable.

Our \emph{complexity barrier breaking reductions} provide a new way of
using fixed-parameter tractability and enlarges its applicability.  In
fact, our approach as exemplified above for \ASP is very general and
might be applicable to a wide range of other hard combinatorial
problems that lie beyond $\NP$ or $\coNP$. We hope that our work
stimulates further investigations into this direction such as the
application to abduction very recently established by
\citex{PfandlerRummeleSzeider13}.

Our first empirical results suggest that with an improved SAT encoding
and preprocessing techniques to reduce the size of $\Normal$\hy
backdoors (for instance,
\emph{shifting},~\citey{JanhunenOikarinenTompitsWoltran07}), our
approach could be of practical use, at least for certain classes of
instances, and hence might fit into a portfolio-based solver.

\shortversion{
\appendix
\begingroup
    \fontsize{9pt}{10pt}\selectfont
}

\newcommand{\publisher}[1]{\newblock #1}
\newcommand{\pagenums}[1]{#1}
\newcommand{\journalab}{Journal\xspace}

\shortversion{
\newcommand{\MKP}{Morgan Kaufmann}
\bibliographystyle{aaai}

\endgroup}

\longversion{
\bibliographystyle{named}

}
\end{document}